\newlength{\textlarg}
\DeclareMathOperator*{\argmax}{arg\,max}
  \theoremstyle{definition}
  \newtheorem{assumption}{\protect\assumptionname}
  \theoremstyle{plain}
 \theoremstyle{definition}
  \providecommand{\assumptionname}{Assumption}
\theoremstyle{definition}
\newtheorem{proposition}{Proposition}
\newcommand{\R}{\mathbb{R}}
\def\tr{\color{black}}
\renewenvironment{proof}[1][\proofname] {\par\pushQED{\qed}\normalfont\topsep6\p@\@plus6\p@\relax\trivlist\item[\hskip\labelsep\bfseries#1\@addpunct{.}]\ignorespaces}{\popQED\endtrivlist\@endpefalse}
\begin{document}
 
\title{A perturbed utility route choice model\thanks{This research is funded by the European Research Council (ERC) under the European Union's Horizon 2020 research and innovation programme (grant agreement No. 740369). We thank Michel Bierlaire, Emma Frejinger, {\tr the editor Nikolas Geroliminis, as well as three anonymous referees for very useful comments}.}}

\author{Mogens Fosgerau\thanks{University of Copenhagen; \protect\url{mogens.fosgerau@econ.ku.dk}.}
\and Mads Paulsen\thanks{Technical University of Denmark; \protect\url{madsp@dtu.dk}.}
\and Thomas Kjær Rasmussen\thanks{Technical University of Denmark; \protect\url{tkra@dtu.dk}.}}

\date{\today}
\maketitle
\vspace*{-0.75cm}

\begin{abstract}
{\tr We propose a route choice model in which traveler behavior is represented as a utility maximizing assignment of flow across an entire network under a flow conservation constraint}. Substitution between routes depends on how much they overlap. {\tr  The model is estimated considering the full set of route alternatives, and no choice set generation is required.} 
Nevertheless, estimation requires only linear regression and is very fast. Predictions from the model can be computed using convex optimization, and computation is straightforward even for large networks. {\tr We estimate and validate the model using a large dataset comprising 1,337,096 GPS traces of trips in the Greater Copenhagen road network. }   \end{abstract}
\bigskip
\noindent \textbf{Keywords:} Route choice; perturbed utility; discrete choice; networks

\section{Introduction}

Big data that traces individual vehicles through complex traffic networks is now widely available. This opens new possibilities for estimating route choice models that better reflect actual behavior. However, route choice models face the curse of dimensionality, as the number of potential routes through a realistically sized network is extremely large. 

{\tr This paper formulates a route choice model as an instance of a perturbed utility model. In a general perturbed utility model, a consumer chooses a consumption vector $x$ from some budget set $B$ that solves a utility maximization problem of the form $\hat x= \argmax_{x \in B}(a^\intercal x - F(x))$, i.e.  where the utility function is a linear function ``perturbed'' by subtracting a convex function \citep{McFadden2012, Fudenberg2015,Allen2019}. Perturbed utility models are firmly rooted in modern microeconomic theory and can be interpreted as representing a population of agents whose individual behavior is described by one of a wide range of models, where the additive random utility discrete choice model is an example.\footnote{The choice probability vector $p(v)=(p_1(v),\ldots,p_J(v))$ with $v=(v_1,\ldots,v_J)$ of an additive random utility model with utilities $u_j=v_j+\epsilon_j$ can be found as the solution to the perturbed utility maximization problem $p(v) \in \argmax (v^\intercal p - F(p))$, where $F$ is the convex conjugate of the surplus function; this result holds for any joint distribution of random residuals $\epsilon_j$ \citep{Sorensen2019}.} Alternatively, a perturbed utility model can be interpreted at face value to represent the behavior of individual agents who randomize across options \citep{Allen2019}. 

We utilize this framework to formulate a new kind of route choice model. We consider a traveler, represented as choosing their network flow vector $x$ to maximize a certain perturbed utility function. The linear component of utility is the sum across links $e$ of link utility times individual link flow $x_e$. The convex perturbation is specified as a link-length weighted sum across the network links of positive and increasing convex functions $F(x_e)$, each of these being a function of the individual link flow. The perturbation terms induces the traveler to avoid concentrating their flow on few links. The network structure is incorporated through the budget constraint, which requires that flow is conserved at each node in the network. A final important point of the model setup is that we require that $F(0)=F'(0)=0$ while link utilities are negative (representing travel cost with the sign reversed). This has the consequence that link flows will be zero in parts of the network where the flow conservation constraint is not active.

In this paper, we formulate the model, analyze its properties, derive an estimator, and then apply the model to simulated and real data. We will show that the new model has a range of desirable properties. In particular, it allows all physically possible network flows; it predicts that many links are unused by any traveler, while no choice set generation is required. Furthermore, we will show that it implies realistic substitution patterns while being very fast to estimate.}

We use the first-order conditions for the traveler's perturbed utility maximization problem to formulate a linear regression equation, which allows the parameters of the route choice model to be estimated from observed data. We then present a transformation that eliminates the flow conservation constraint such that ordinary least squares (OLS) estimation is applicable. This is a major step forward, as the simplicity of linear regression allows realistic networks to be handled at low computational cost. 

We add to a long line of research into route choice models. Early route choice models have relied on maximum likelihood estimation of an additive random utility discrete choice model \citep{McFadden1981} for the choice between alternative routes. However, the number of possible routes in a large network is extremely large, comparable to the number of atoms in the universe, even if loops are ruled out. Therefore, the main problem for these models is that the number of potential routes is prohibitively large. Much attention has been given to the generation of choice sets with good coverage to avoid bias resulting from excluding relevant alternatives \citep{Prato2009}. Similarly, much attention has been given to finding models that lead to realistic substitution patterns when alternatives routes overlap more or less.\footnote{\citet{Prato2009} provides a review. {\tr More recent overviews of the literature may be found in \citet{Oyama2020} and \citet{Duncan2020}.}} The perturbed utility route choice model (PURC) operates at the network level and does not require a choice set as input. On the contrary, the PURC model predicts which links are active, and the set of routes using these links can be thought of as a consideration set.

Another issue with additive random utility discrete choice models of route choice is that error terms are generally assumed to have full support such that every alternative is chosen with positive probability. This is neither necessarily realistic nor desirable in assignment, as most feasible routes in a large network are quite nonsensical \citep{Watling2015,Rasmussen2017,Watling2018}. {\tr The bounded choice model (BCM) \citep{Watling2018} assigns zero probability to alternatives with a random utility that exceeds an exogenously defined upper bound\footnote{Routes are assigned probabilities based on the distribution of the random utility difference to an ``imaginary'' reference alternative. Probabilities of the routes relate to the odds of choosing each route versus the ’imaginary’ reference alternative. Setting the reference alternative as the route with the minimum generalized cost (maximum systematic utility) is a special case of the BCM, causing routes to only receive non-zero probabilities if their costs are within the bound of the cheapest route. See supplementary material of \citet{Duncan2021} for further details.}. Unlike the PURC model, the BCM requires path enumeration, as the distinction between used and unused alternatives is based on path costs. For large-scale applications with realistic values of the bound on random utility, BCM thus requires the enumeration of very large choice sets.}


A stream of research has considered recursive models in which the traveler is seen as choosing a path link by link in a Markovian fashion. Earlier papers considered the assignment problem \citep{Dial1971, Bell1995, Shen1996a, Baillon2008}.  A recent series of papers has considered estimation by maximum likelihood of what they term the recursive logit model {\tr and generalizations building on the multivariate extreme value distribution  \citep{Fosgerau2013e, Mai2015, Mai2015a, Mai2016}}. In contrast, estimation of the PURC model does not require computationally demanding maximization of a likelihood function. Whereas the basic recursive logit model has the independence of irrelevant alternatives (IIA) property at the route choice level, the PURC model predicts realistic substitution patterns that derive directly from the network structure. {\tr Various route choice models have been proposed that account for substitution patterns across alternatives. The path-size logit (PSL) route choice model \citep{Ben-Akiva1999PSL} adjusts the systematic utility of an alternative based on the overlap with other alternatives in the choice set. Some variations of the PSL model are reviewed in \citet{Duncan2020}. The adaptive path-size logit (APSL) model, proposed by \citet{Duncan2020}, is internally consistent in the sense that the adjustment factors for overlap are calculated based on route choice probabilities. 
Recently, the integration of path-size concepts, including the APSL, into the BCM have been explored by \citet{Duncan2021}. The challenge of path enumeration, however, still remains in these models.  
}

\citet{Oyama2020} cast the assignment problem for the network generalized extreme value model as a concave maximization problem of the perturbed utility form, where the perturbation function is a generalized entropy that incorporates the network structure. This is an instance of the general result that any additive random utility {\tr discrete choice} model can be represented as a perturbed utility model \citep{Hofbauer2002}. In contrast to \citet{Oyama2020}, we here consider estimation of the model parameters. Furthermore, we use a perturbation function that allows corner solutions, which avoids assigning positive probability to all routes that are physically possible.

We set up the model in Section \ref{sec:setup} and derive a linear regression equation that can be estimated by OLS regression. We find that the model implies that routes are substitutes and more so the more they overlap. In Section \ref{sec:experiments}, we illustrate the model's predictions using first a toy network and then a large-scale network covering the Copenhagen metropolitan area. In Section \ref{sec:estsimdata}, we demonstrate that true parameters can be recovered from realistic simulated data from a large network. We go on in Section \ref{sec:application} to estimate the model using a large dataset of GPS traces for trips in the Copenhagen network. We validate the model's predictions against our data. Section \ref{sec:conclusion} concludes the study.

\section{Setup}\label{sec:setup}

A network $(\mathcal{V},\mathcal{E})$ is defined by an incidence matrix $A$, which has a row for each vertex (node) $v\in\mathcal{V}$ and a column for each edge (link) $e\in\mathcal{E}$. As we are considering traffic networks, we will talk about links and nodes. The entries of $A$ are $a_{ve}=-1$ if edge/link $e$ leads out of vertex/node $v$, $a_{ve}=1$ if link $e$ leads into node $v$, and $a_{ve}=0$ otherwise. 

A traveler has unit demand given by the vector $b$, which is a column vector across nodes with $-1$ at the origin node, $1$ at the destination node, and zeros otherwise. There is a traveler for each demand vector $b$. We suppress the dependence on  $b$ in the notation.


{\tr 
The traveler is viewed as solving the constrained utility maximization problem 
\begin{equation}\label{eq:pertutil}
\begin{aligned}
\max_{x \in \mathbb{R}_+^{|\mathcal{E}|}} U\left( x\right)  \qquad \text{s.t. } Ax=b,
\end{aligned}
\end{equation}
where the constraint ensures that flow is conserved at each node in the network. This assumption may be interpreted as saying that if we observe many individual travelers with a given demand $b$, then their combined  network flow vector $x$ is the solution to the problem \eqref{eq:pertutil}.

The utility function $U$ is specified using the perturbed utility form  \citep{McFadden2012, Fudenberg2015,Allen2019}, consisting of a linear term minus a convex perturbation term.\footnote{\tr Subtracting a convex function rather than adding a concave function follows the convention of convex analysis \citep{Rockafellar1970}.} We specify the utility function as a weighted sum over links weighted by the length of each link as follows.  
\begin{equation}\label{eq:U}
    U\left( x\right)  = l^\intercal\left( u\circ x\right) -l^\intercal F(x).
\end{equation}

We will first explain the linear term $l^\intercal\left( u\circ x\right)$. The vector $u=(u_e)_{e\in \mathcal{E}}$ has a component for each link that expresses the utility rate --- that is, the utility per distance unit --- of using that link. We assume that all components of $u$ are negative. Vector $l=(l_e)_{e\in \mathcal{E}}$ comprises the link lengths. 
The vector $u \circ x = (u_e x_e)_{e\in \mathcal{E}}$ is the component-wise product of the vector of link utility rates and the flow vector. Hence, the term $l^\intercal\left( u\circ x\right)$ is the sum across links of the utility rate multiplied by link flow and by link length. 

The second term, the perturbation, is the sum across links, weighted by link length, of the function $F$ applied component-wise to the flow vector, that is, $F(x)=(F(x_e))_{e\in \mathcal{E}}$. We make the following assumption concerning $F$.

\begin{assumption}\label{ass:F}
$F:\mathbb{R}_+\rightarrow \mathbb{R}$ is strictly convex, $F(0)=0$, and $F'(0)=0$. 
\end{assumption}

The perturbation $F(x_e)$ corresponding to each link  is zero and flat at zero flow, which, together with the condition that link utilities are negative, ensures that the optimal flow on any link is zero if the flow conservation constraint is not active there. The combined  perturbation $l^\intercal F(x)$ is strictly convex, as it is a linear combination of strictly convex terms. The presence of the combined perturbation term makes the utility function  \eqref{eq:U} strictly concave, which ensures a unique solution to the utility maximization problem. 

The perturbation term becomes large and negative if much flow is concentrated on a link. This induces the traveler to distribute their flow across more links. In this way, the perturbation term works in much the same way as a congestion term would work, but with the crucial difference that the perturbation is part of the traveler's preferences and depends not on the behavior of other travelers but only on the individual flow vector $x$. 

We also note that the utility function \eqref{eq:U} is a sum  across links of independent terms. The flow conservation constraint creates dependencies whereby changes in utilities on some links induces substitution with flow on other links. 
}

All terms in utility function \eqref{eq:U} are weighted by link length, which makes the model invariant to link splitting. Consider a link $e$ with a contribution to the utility that is $l_e u_e x_e - l_e F( x_e)$, and split that link into two links $e_1,e_2$ with $l_e = l_{e_1}+l_{e_2}$, $x_e=x_{e_1}=x_{e_2}$, and $u_e=u_{e_1}=u_{e_2}$. Then, 
\[l_e u_e x_e - l_e F(x_e) = l_{e_1} u_{e_1} x_{e_1} - l_{e_1} F(x_{e_1} ) +l_{e_2} u_{e_2} x_{e_2} - l_{e_2} F(x_{e_2} ). \]
In other words, the utility contribution of the whole link is equal to the sum of the contributions of the two parts. This is a desirable model feature, as it makes the predictions of the model invariant with respect to the introduction of dummy nodes.

The utility rate vector $u$ is specified as a linear function $u=z\beta$ of link characteristics organized in a matrix $z$ 
and parameters organized in a column vector $\beta$ to be estimated. {\tr Which link characteristics to include in $z$ depends on the specific application. In our empirical application in Section  \ref{sec:application}, we use number of outlinks (divided by link length), travel time per kilometer (i.e., pace), and link type dummies.}

Even though the utility function \eqref{eq:pertutil} does not comprise an explicit representation of heterogeneity, it may still be interpreted as representing an underlying heterogeneous population of travelers {\tr  \citep{Allen2019}, in the same way as the choice probabilities of some additive random utility discrete choice model represent an underlying population of agents with heterogeneous random utility residual terms.   }

\subsection{Solving the traveler's problem}

{\tr We solve the traveler's problem \eqref{eq:pertutil} by setting up the corresponding Lagrangian

\[
\Lambda \left( x,\lambda \right) =l^{\intercal }\left( u\circ x\right)
-l^{\intercal }F\left( x\right) 
+\lambda ^{\intercal }\left( Ax-b\right),
\]%
}
where $\lambda \in \R^{|\mathcal{V}|}$ is a vector of Lagrange multipliers, one for each node in the network, corresponding to the flow conservation constraints. For each link $e$, $x_{e}$ is either zero or the partial derivative of the
Lagrangian with respect to $x_{e}$ is zero. Altogether, we have the first-order conditions
\begin{eqnarray*}
0 &=&\hat x\circ \left( l\circ \left( u-F'(\hat x) \right) +A^{\intercal
}\hat \lambda \right)  \\
A\hat x &=&b,
\end{eqnarray*}
where hats denote the optimal values of $x$ and $\lambda$.

{\tr We can interpret the first-order conditions, noting that for each active link, the marginal utility rate equals the marginal perturbation, that is,  $u_e=F'(\hat x_e)$}, except for the effect induced by the flow conservation constraint. 

Let $B$ be a matrix that is an $|\mathcal{E}|$-dimensional identity matrix, except (at least) all rows corresponding to edges with zero flows are omitted. Pre-multiplying the first-order conditions by $B$, we can disregard links with zero flows (and possibly more), obtaining 

{\tr
\begin{equation}\label{eq:FOC_B}
B\left( l\circ u\right) =B\left( l\circ F'(\hat x) \right)
-BA^{\intercal }\hat \lambda.
\end{equation} }

{\tr We allow that $B$ omits also rows corresponding to some positive flows. We mention this as it may be useful in future applications where there is concern over sampling noise owing to small vehicle counts on some links. We have not made use of this possibility in the results presented in this paper.}  

Equation \eqref{eq:FOC_B} can be used to formulate a regression model. For each origin--destination (OD) pair, the corresponding traveler's problem leads to vectors $( \hat x,\hat \lambda)$,  where the elements of {\tr $B (l \circ F'(\hat x))$ }can be used as  dependent variables.

Noting that 
\[B(l\circ u) = B(l\circ z\beta) = B(l\circ z) \beta, \]
the matrix $B(l\circ z)$ can act as independent variables with corresponding parameter vector $\beta$. 
The Lagrange multipliers $\hat \lambda$ can be treated as fixed effects, that can be corrected for in the regression. That is, however, computationally challenging, as the size of $\hat \lambda$ is equal to the number of nodes in the network and the vector is specific to each OD combination. The number of fixed effects may therefore become very large as the size of the network and the number of OD combinations increases. We will therefore seek an alternative regression equation that eliminates $\hat \lambda$ {\tr from \eqref{eq:FOC_B}}.

\subsection{Eliminating the Lagrange multipliers}

{\tr For eliminating $\hat \lambda$ from \eqref{eq:FOC_B}, we have to take into account that the matrix $BA^{\intercal }$ is not invertible. 
Let }$C=\left( BA^{\intercal }\right) ^{\ast }$ be a Moore--Penrose inverse of $BA^{\intercal }$. We may then utilize that $BA^{\intercal }CBA^{\intercal }=BA^{\intercal }$. Multiplying the traveler's reduced first-order condition \eqref{eq:FOC_B}  by $BA^{\intercal }C$, we find that
{\tr
\begin{eqnarray*}
BA^{\intercal }CB\left( l\circ u\right)  &=&BA^{\intercal }CB\left( l\circ F'(\hat x) \right) -BA^{\intercal }CBA^{\intercal }\lambda  \\
&=&BA^{\intercal }CB\left( l\circ F'(\hat x) \right) -BA^{\intercal
}\lambda,
\end{eqnarray*}%
which leads to
\[
BA^{\intercal }\lambda =BA^{\intercal }CB\left( l\circ \left( F'(\hat x) -u\right) \right).
\]}
Substituting this back into the reduced first-order condition \eqref{eq:FOC_B}, we find that 
{\tr

\begin{equation*}
B\left( l\circ u\right) =B\left( l\circ F'(\hat x) \right)
-BA^{\intercal }CB\left( l\circ \left( F'(\hat x) -u\right) \right) 
\end{equation*}
which leads to 
\begin{equation}\label{eq:FOC_C}\left( I-BA^{\intercal }C\right) B\left( l\circ u\right) =\left( I-BA^{\intercal
}C\right) B\left( l\circ F'(\hat x) \right).
\end{equation} }
We have thus managed to eliminate the Lagrange multipliers from the first-order condition.

\subsection{A regression equation {\tr for model estimation}}\label{sec:reg_eq}
Given a flow vector $\hat x$ corresponding to demand $b$, we now use \eqref{eq:FOC_C} to motivate the construction of a vector {\tr $y=( I-BA^{\intercal }C) B( l\circ F'(\hat x) )$} and a matrix $w =\left( I-BA^{\intercal }C\right) B\left( l\circ z\right)$ with the aim of estimating the parameters of the model by regression. 

Doing this for each demand vector $b$, we construct pairs $(y_{bi},w_{bi})$, where $b\in\mathcal{B}$ indexes the set of demand vectors and $i$ indexes the elements of each vector $y_b$. 

Adding mean zero noise terms $\epsilon_{bi}$, we see that parameters $\beta$ can be recovered from the regression

\begin{equation}\label{eq:OLS}
    y_{bi} = w_{bi} \beta + \epsilon_{bi}.
\end{equation}
We assume the noise terms to be independent of the variables in $w_{bi}$, but allow them to be heteroscedastic, which leads us to use robust standard errors when running regressions. 

The computational complexity of the regression is  independent of the number of route choice observations constituting the flow vector $\hat{x}$ for each $b$, and only increases linearly with the size of the set of demand vectors $\mathcal{B}$, as the computationally most demanding task is obtaining the Moore-Penrose inverse $C$ for each $b$.\footnote{We used the IMqrginv function in Matlab \citep{Ataei2014}, which required less than 0.06 seconds per OD combination.
It is probably possible to utilize sparsity to speed this up.}



\subsection{Routes are substitutes}

The perturbed utility route choice model is formulated in terms of links, which we have seen is useful for estimation and avoids the need to enumerate the set of available routes.  Proposition \ref{prop:substitutes}{, established in this section, }shows that the model can (in principle) equivalently be formulated in terms of routes. {\tr Formulating the model in terms of routes would be impossible in practice, as all routes would have to be enumerated. However, this} perspective allows some understanding to be gained of the model's properties. 

The proposition shows that the equivalent model is a perturbed utility discrete choice model over the universal set of loop-free routes in which the routes are substitutes. An important driver of this result is that loops are ruled out by utility maximization.

{\tr Why is it important to note that routes are substitutes? Most discrete choice models used in transportation are additive random utility models \citep{McFadden1974a, McFadden1981}. A feature of these models is that alternatives are always substitutes in the sense that increasing the utility of one alternative (weakly) decreases the probability that any other alternative is chosen. This property does not, however, hold in general for all discrete choice models and, in particular, it does not hold in general for perturbed utility discrete choice models \citep{Allen2019b}. Depending on the specification of the convex perturbation function, these models can allow alternatives to be complements such that increasing the utility of one alternative increases the probability that another alternative is chosen. This is important in practice, as there are many situations in which alternatives are complements. In a supermarket, we may think of products that are often consumed together (such as salsa and nachos). In route choice, many complementarities occur at the level of links.  It is therefore quite informative and important for the understanding of the present model to establish that it predicts that routes are substitutes.}

\begin{proposition}\label{prop:substitutes}
The perturbed utility route choice model is equivalent to a discrete choice model for the choice among all loop-free routes connecting origin to destination.  The routes are substitutes in the sense that if the utility of just one route is increased while the utility of all other routes is unaffected, then the choice probability weakly decreases for all other routes.
\end{proposition}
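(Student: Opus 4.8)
The plan is to use a flow-decomposition argument to rewrite the link-based problem \eqref{eq:pertutil}--\eqref{eq:U} as an equivalent perturbed utility problem over the finite set $\mathcal{R}$ of loop-free origin--destination routes, and then to establish the substitutes property for that reformulation.

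\textbf{Equivalence.} Let $\delta^{r}\in\{0,1\}^{|\mathcal{E}|}$ be the link-incidence vector of a loop-free route $r\in\mathcal{R}$. By the flow decomposition theorem, any feasible $x$ (with $x\geq 0$, $Ax=b$) can be written as a convex combination $\sum_{r}p_{r}\delta^{r}$ of route vectors plus a non-negative circulation. Since all link utilities $u_{e}$ are negative and $F$ is non-decreasing (being convex with $F'(0)=0$), removing any non-zero circulation strictly raises the objective in \eqref{eq:U}; hence the optimal flow carries no circulation, so $\hat x=\sum_{r}p_{r}\delta^{r}$ with $p$ in the simplex $\Delta=\{p\geq 0:\sum_{r}p_{r}=1\}$. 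Substituting into \eqref{eq:U}, the linear term becomes $\sum_{r}p_{r}U_{r}$ with route utility $U_{r}=\sum_{e\in r}l_{e}u_{e}$, and the perturbation becomes $G(p)=\sum_{e}l_{e}F(\sum_{r\ni e}p_{r})$, which is convex as a sum of convex functions. The model is therefore equivalent to the perturbed utility discrete choice model $\max_{p\in\Delta}\sum_{r}p_{r}U_{r}-G(p)$. One subtlety to address is that $p$ need not be unique when the vectors $\{\delta^{r}\}$ are linearly dependent, so the substitutes statement is to be read for this induced discrete choice model (equivalently, after a strictly convex refinement of $G$ that leaves the implied link flows unchanged).

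\textbf{Substitutes, the easy half.} Write $V_{U}(p)=\sum_{r}p_{r}U_{r}-G(p)$ and let $p$, $p'$ maximize $V_{U}$, $V_{U'}$ over $\Delta$ with $U'=U+\epsilon e_{r_{0}}$, $\epsilon>0$. Adding the optimality inequalities $V_{U}(p)\geq V_{U}(p')$ and $V_{U'}(p')\geq V_{U'}(p)$ and using $V_{U'}(q)-V_{U}(q)=\epsilon q_{r_{0}}$ yields $p'_{r_{0}}\geq p_{r_{0}}$: the improved route gains probability, and hence $\sum_{r\neq r_{0}}p'_{r}\leq\sum_{r\neq r_{0}}p_{r}$. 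What remains, and is harder, is to show that \emph{each} competing route weakly loses probability, not merely that their sum does.

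\textbf{Substitutes, the hard half, and the main obstacle.} Here I would exploit the structure $G=\Phi\circ M^{\intercal}$, where $M$ is the non-negative $0$--$1$ route--link incidence matrix and $\Phi(\xi)=\sum_{e}l_{e}F(\xi_{e})$ is additively separable, strictly convex and increasing; this makes $G$ supermodular, since $\partial^{2}G/\partial p_{r}\partial p_{r'}=\sum_{e\in r\cap r'}l_{e}F''(\cdot)\geq 0$ for $r\neq r'$. In the smooth, interior case, differentiating the first-order conditions $U-\nabla G(p)=\mu\mathbf{1}$ and $\mathbf{1}^{\intercal}p=1$ expresses the comparative static as $\dot p=\Gamma e_{r_{0}}$, where $\Gamma$ is the constrained inverse of the Hessian $\nabla^{2}G=M\,\mathrm{diag}(l_{e}F'')\,M^{\intercal}$ along $\mathbf{1}^{\perp}$; $\Gamma$ is automatically symmetric, positive semidefinite and satisfies $\Gamma\mathbf{1}=0$, so the substitutes claim reduces to showing that $\Gamma$ has non-positive off-diagonal entries, i.e.\ that it is a weighted graph Laplacian on the routes. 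Establishing this sign pattern is the crux, and it is precisely where the network structure enters: loops being ruled out is what makes $M$ a genuine $0$--$1$ path-incidence matrix, and this must then be combined with an active-set/limiting argument to cover corner solutions in which some routes are unused. An alternative is a global monotone-comparative-statics (Topkis-type) argument driven by the supermodularity of $G$; the complication there is that the simplex is not a sublattice of $\mathbb{R}^{|\mathcal{R}|}$, so one would first fix $p_{r_{0}}$ and run the argument on the residual scaled simplex over the remaining routes.
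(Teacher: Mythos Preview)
Your equivalence argument via flow decomposition matches the paper's, and you correctly compute the mixed second partials $\partial^{2}G/\partial p_{r}\partial p_{r'}=\sum_{e\in r\cap r'}l_{e}F''(\cdot)\geq 0$, which is exactly the structural input the paper uses.

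The genuine gap is in your ``hard half''. You stop at the point where the sign of the cross-partials of $G$ must be converted into the sign of the cross-partials of the demand map $u\mapsto p(u)$, and you yourself flag that neither route (a) (sign pattern of the constrained inverse Hessian $\Gamma$) nor route (b) (Topkis on a non-sublattice) is complete. The paper does not attempt either of these directly. Instead it passes to the indirect utility $U^{\ast}(u)=\sup_{p\in\Delta(\Sigma)}\{\sum_{\sigma}p_{\sigma}u_{\sigma}-G(p)\}$ and invokes two results from \citet{Feng2018}: their Lemma~1, which transfers the modularity of $G$ to $U^{\ast}$ through convex conjugacy, and their Theorem~1, which states that a differentiable choice welfare function with that modularity property has all alternatives as substitutes. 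Differentiability of $U^{\ast}$ is obtained from \citet[Thm.~26.3]{Rockafellar1970}, using that $G$ is essentially strictly convex. This packages precisely the Hessian-inversion step you call ``the crux'' into an off-the-shelf lemma, and it handles corner solutions automatically because the conjugate argument is global rather than a local differentiation of first-order conditions. Relatedly, where you only flag non-uniqueness of $p$ and suggest an ad hoc refinement, the paper argues that $G$ is in fact \emph{strictly} convex in $p$ (not merely convex), which both pins down the route-probability vector uniquely and supplies the essential strict convexity needed for Rockafellar's theorem; you should include that strict-convexity argument rather than leave it as a caveat.
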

\begin{proof}[Proof of Proposition \ref{prop:substitutes}]
Denote the set of all routes connecting origin to destination by $\Sigma $
and a single route as a list of links $\sigma =\left\{ e_{\sigma \left(
1\right) },e_{\sigma \left( 2\right) },\ldots\right\} $. Any flow-conserving
flow vector can be written as a sum of route flows. The utility function %
\eqref{eq:pertutil} can then be written in terms of vectors $p=\left\{ p_{\sigma },\sigma \in \Sigma \right\} \subseteq \mathbb{R}^{|\Sigma|}_+$ as
\begin{eqnarray}\label{eq:route_util}
U\left( p\right)  &=&\tsum\limits_{\sigma \in \Sigma }p_{\sigma }u_{\sigma
}-G\left( p\right) ,\text{ where} \\ 
u_{\sigma } &=&\tsum\limits_{e\in \sigma }u_{e},\quad G\left( p\right)
=\tsum\limits_{e\in \mathcal{E}}l_{e}F\left( \tsum\limits_{\sigma \ni
e}p_{\sigma }\right) ,
\end{eqnarray}%
and note that we do not yet restrict the domain of $G$ to probability vectors. The utility is strictly decreasing in each element of $p$, which means no loss of generality when restricting $\Sigma $ to be finite, consisting only
of routes with no loops. The function $F$ is strictly convex by assumption
and hence $G$ is convex. In fact, it is strictly convex as shown by the following argument. 

Assume $p^{1}\neq
p^{2} \in \mathbb{R}^{|\Sigma|}_+$, then%
\begin{eqnarray*}
G\left( \eta p^{1}+\left( 1-\eta \right) p^{2}\right)  &=&\tsum\limits_{e\in
\mathcal{E}}l_{e}F\left( \eta \tsum\limits_{\sigma \ni e}p_{\sigma
}^{1}+\left( 1-\eta \right) \tsum\limits_{\sigma \ni e}p_{\sigma
}^{2}\right)  \\
&\leq &\eta \tsum\limits_{e\in \mathcal{E}}l_{e}F\left( \tsum\limits_{\sigma
\ni e}p_{\sigma }^{1}\right) +\left( 1-\eta \right) \tsum\limits_{e\in
\mathcal{E}}l_{e}F\left( \tsum\limits_{\sigma \ni e}p_{\sigma }^{2}\right)
\\
&=&\eta G\left( p^{1}\right) +\left( 1-\eta \right) G\left( p^{2}\right) ,
\end{eqnarray*}%
where the inequality is strict for links $e$ with $\tsum\limits_{\sigma \ni
e}p_{\sigma }^{1}\neq \tsum\limits_{\sigma \ni e}p_{\sigma }^{2}$. However, such links exist since $p^{1}\neq p^{2}$ and hence $G$ is strictly convex.

The convex conjugate of $G$ restricted to the set of probability vectors is the indirect perturbed utility
\[
U^{\ast }\left( u\right) =\sup_{p\in \Delta \left( \Sigma \right) }\left\{
U\left( p\right) \right\} =\sup_{p\in \Delta \left( \Sigma \right) }\left\{
\tsum\limits_{\sigma \in \Sigma }p_{\sigma }u_{\sigma }-G\left( p\right)
\right\} .
\]%
If $G$ is supermodular, then its convex conjugate $U^{\ast }$ is supermodular \citep[e.g.,][Lem. 1]{Feng2018}. So consider the mixed partial derivatives of the convex perturbation,%
\[
\frac{\partial ^{2}G\left( p\right) }{\partial p_{\sigma _{1}}\partial
p_{\sigma _{2}}}=\tsum\limits_{e\in \mathcal{E}}l_{e}\frac{\partial
F^{\prime }\left( \tsum\limits_{\sigma \ni e}p_{\sigma }\right) 1_{\left\{
e\in \sigma _{1}\right\} }}{\partial p_{\sigma _{2}}}=\tsum\limits_{e\in
\mathcal{E}}F^{\prime \prime }\left( \tsum\limits_{\sigma \ni e}p_{\sigma
}\right) 1_{\left\{ e\in \sigma _{1}\cap \sigma _{2}\right\} },
\]%
to conclude they are either zero for routes that do not overlap or
negative. Hence, the convex perturbation $G$ is submodular and $U^*$ is supermodular. 
Moreover, $U^{\ast }$ satisfies the definition of a choice welfare function
in \citet{Feng2018}. Hence, by \citet[][Thm. 1]{Feng2018}, if $U^{\ast }$ is
differentiable, then all routes are substitutes. However, this holds by \citet[][Thm. 26.3]{Rockafellar1970}, as $G$ is essentially strictly convex.\hfill
\end{proof}

The proof of Proposition \ref{prop:substitutes} shows that the perturbed utility model has an equivalent formulation as a perturbed utility model  \eqref{eq:route_util} for the choice among all loop-free routes connecting origin to destination. The convex perturbation $G$ is a length-weighted sum across links of terms that are the convex function $F$ applied to each link flow. The flow on link $e$ is the sum of route choice probabilities for routes using that link,  $\tsum\limits_{\sigma \ni
e}p_{\sigma }$. The presence of such sums in the utility function generates a tendency for substitutability. In fact, if there was only one such  term $F\left(\tsum\limits_{\sigma \ni
e}p_{\sigma }\right)$, then routes using link $e$ would be perfect substitutes. {\tr We explore the model's predicted substitution patterns in Section \ref{sec:experiments} below.}

{ \tr 

\subsection{Functional form for the perturbation function}
The shape of the perturbation function matters for the pattern of substitution across links with different flows. To see this, consider how the optimal flow vector is affected by a change in the utility of some active link. The first-order condition \eqref{eq:FOC_B} involves the derivatives $F'(\hat x_e)$, which are increasing in link flows. Hence, smaller changes in link flows are required to adjust the first-order condition on links where the flow is high than on links where the flow is low. The size of the difference is determined by the degree of convexity of the perturbation function. 

To apply the perturbed utility model, we need to impose a specific form on the perturbation function while respecting Assumption \ref{ass:F}. Taking inspiration from the well-known connection between the multinomial logit (MNL) model and the Shannon entropy \citep[e.g.,][]{Hofbauer2002}, we use the form of the entropy function, modified to satisfy the requirements of Assumption \ref{ass:F}, and define
\begin{equation}\label{eq:modentropy}
F(x) = \left(
\left( 1+x\right) \cdot \ln \left( 1+x\right) -x\right),  
\end{equation}
where  $F'(x) = \ln(1+x)$ and $F''(x)=(1+x)^{-1}$, such that $F'(0)=0$ and $F''(x)>0$ for $x>0$ as required. In our empirical application in Section \ref{sec:application}, we also tested a quadratic perturbation $F(x)=x^2$, but found that the entropy-like function \eqref{eq:modentropy} led to a better fit. 
}

\section{Experimenting with the model}\label{sec:experiments}

In this section, we investigate the behavior of the model by solving the traveler's problem, first on a small toy network and then on a large network for the  Copenhagen metropolitan area.

\subsection{A toy example} \label{sec:ToyExample}

\begin{figure}[h]
    \centering
    \includegraphics[width=0.5\textwidth]{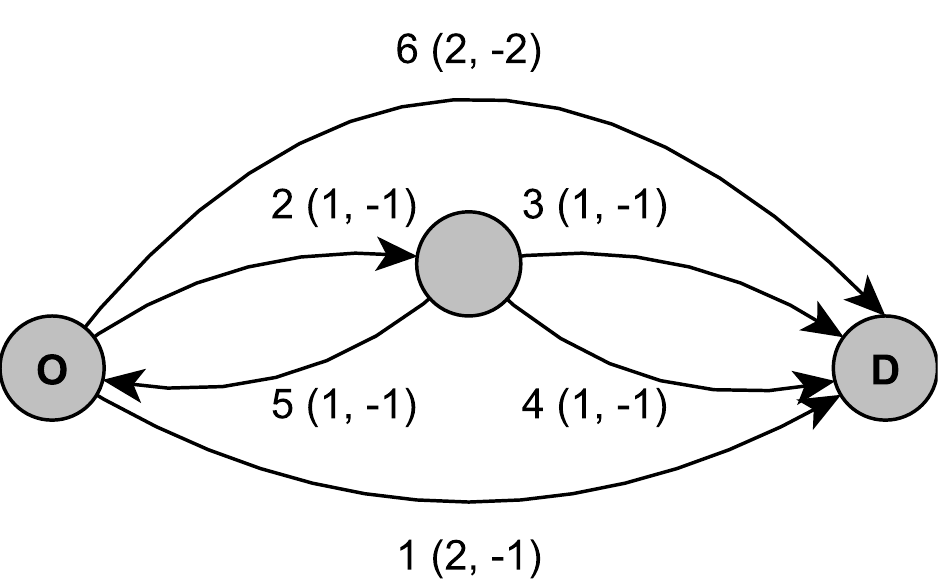}
    \caption{Toy network. Numbers in parentheses indicate length and unit cost, i.e., ($l_e$, $u_e$).}
    \label{fig:ToyNetwork}
\end{figure}

Consider a toy network with six links and four loop-free routes from origin to destination; see Figure \ref{fig:ToyNetwork}. One route uses link 1 with length $l_1=2$, which goes straight from the origin to the destination. Two routes share link 2 with length $l_2=1$ and then split into links 3 and 4, each also with length $l_3=l_4=1$. The fourth route uses link 6 with length $l_6=2$ that also directly connects the origin to the destination.  
To test the ability of the perturbed utility model to handle {\tr the possibility of loops}, we add link 5 that is equal to link 2 but goes in the opposite direction, such that it is consistent with flow conservation to have flow looping on links 2 and 5.  The unit link utility rate is $u_e=-1$ on all links except link 6, on which $u_6=-2$. 
This causes three of the four direct routes to have the same total disutility, and the alternative using link 6 to be twice as costly.

The base case utility maximizing flows for the PURC model are shown in Table \ref{tab:toy_example}. We have also included flows for a corresponding MNL model \citep{McFadden1974a}  and a PSL model  \citep{Ben-Akiva1999PSL}.\footnote{ {\tr Parameters for the MNL and PSL models were calibrated against PURC flow. As link 6 was unused in the base network for the PURC, the calibration was performed on a slightly modified network with $u_6=-1.25$, causing link 6 to be used for all models. 
The direct utilities of alternative $n$ for the MNL and PSL models were defined as $V_n = \beta_u U_n$ and $V_n = \beta_u U_n + \beta_\text{PS} \ln S_n$, respectively, where $U_n = \sum\limits_{e \in n} l_e u_e$, and $S_n$ was defined as in \cite{Ben-Akiva1999PSL} based on link lengths. In both cases, the probability of choosing an alternative $n$ within the choice set $\mathcal{C}$ was given by $P_n = \frac{e^{V_n}}{\sum\limits_{k \in \mathcal{C}} e^{V_k}}$.
Setting $\beta_u=1$ for the PURC model and d}isallowing loops for the MNL and PSL models, the parameters found to reduce the sum of square errors across the link flows for the MNL and PSL models were
$\beta_u = 2.0$ and $\beta_\text{PS} = 1.1$, respectively.
}
We note that the PURC model assigns zero flow on link 5; hence, utility maximization causes the loop to not occur even though it is (physically) possible in the model. In contrast, the MNL and PSL models rely on generated choice sets for which we omitted paths with loops. {\tr We note also that the PURC model assigns zero flow on the costly link 6, which demonstrates that the model is in fact able to assign zero flow.} In contrast, the MNL and PSL models both assign a positive flow because the route using link 6 is included in the choice set.

\begin{table}[ht]
    \centering
    \begin{tabular}{|l|l|ll|ll|ll|ll|}  \hline
     & Link  &  $l_e$ & $u_e$    & \multicolumn{2}{l|}{PURC flow}  & \multicolumn{2}{l|}{MNL flow}  & \multicolumn{2}{l|}{PSL flow}    \\ \hline
    Base &   $1$ & 2 & -1 &  0.424 & & 0.331 & & 0.404 &\\
    &   $2$     &  1 & -1 &  0.576 & & 0.663 & & 0.589 &\\
    &   $3$     &  1 & -1 &  0.288 & & 0.331 & & 0.294 &\\
    &   $4$     &  1 & -1 &  0.288 & & 0.331 & & 0.294 &\\ 
    &   $5$     &  1 & -1 &  0 & & 0\footnote & & 0\footnotemark[\value{footnote}]  &\\ 
    &   $6$     &  2 & -2 &  0 & & 0.006 & & 0.007 &\\ 
    \hline
   Increase unit cost  &    $1$     &  2 & -1 &  0.445 & \textit{(1.047)} & 0.352 & \textit{(1.064)} & 0.427 & \textit{(1.056)}\\
    on link 4&   $2$     &  1 & -1 & 0.555 & \textit{(0.965)} & 0.641 & \textit{(0.967)} & 0.566 & \textit{(0.961)}\\
   \textit{(rel. to base)}&   $3$     &  1 & -1 & 0.342 & \textit{(1.187)} & 0.352 & \textit{(1.064)} &  0.311 & \textit{(1.056)}\\
    &   $4$     &  1 & -1.1 & 0.214 & \textit{(0.743) }& 0.289 & \textit{(0.871)} & 0.255 & \textit{(0.865)}\\ 
    &   $5$     &  1 & -1 & 0 & - & 0\footnotemark[\value{footnote}]& - & 0\footnotemark[\value{footnote}] & -\\ 
    &   $6$     &  2 & -1 & 0& - & 0.006 & \textit{(1.064)} & 0.008 & \textit{(1.056)}\\ 
    \hline
    Move node at the &   $1$     &  2 & -1 & 0.381 & \textit{(0.897)} & 0.331 & \textit{(1)} & 0.364 & \textit{(0.902)}\\
    end of link 2&   $2$     &  0.5 & -1 & 0.619 & \textit{(1.076)} & 0.663 & \textit{(1)} & 0.629 & \textit{(1.069)}\\
     \textit{(rel. to base)}&   $3$      &  1.5 & -1 & 0.310 & \textit{(1.076)} & 0.331 & \textit{(1)} & 0.315 & \textit{(1.069)}\\
    &   $4$     &  1.5 & -1 &  0.310 & \textit{(1.076)}  & 0.331 & \textit{(1)} & 0.315 & \textit{(1.069)}\\ 
    &   $5$     &  0.5 & -1 & 0 & - & 0\footnotemark[\value{footnote}] & - & 0\footnotemark[\value{footnote}] & -\\ 
    &   $6$     &  2 & -2 & 0 & - & 0.006 & \textit{(1)} & 0.007 & \textit{(0.902)}\\ \hline
    \end{tabular}
    \caption{Utility maximizing flows in the toy network. {\tr Numbers in parentheses indicate flows relative to base.}}
    \label{tab:toy_example}
\end{table}
\footnotetext{Zero flow is purely a consequence of the choice set generation method.}

We consider two experiments. In the first, we increase the unit cost on link 4 by 0.1. In all three models, this leads to a decrease in the flow on the route using link 4. In the PURC model, the flow on the route using link 3 increases more in relative terms than the flow on link 1. Thus, the IIA property does not hold for the PURC model: the routes using links 3 and 4 are closer substitutes with each other than with the route using link 1. This is a desirable property, which occurs because the two close routes share link 2.
In contrast, in the MNL and PSL models, the flow on the routes using links 1, 3, and 6 increase by the same proportion, in accordance with the IIA property.

In the second experiment, we reduce the lengths of links 2 and 5 by 0.5 and increase the lengths of links 3 and 4 by the same amount. This makes the routes using link 2 more dissimilar and hence attracts more flow to them in the PURC model. The limiting cases of this kind of change are as desired: the utility maximizing flows split evenly on the three routes if link 2 is reduced to zero length, and the utility maximizing flows split fifty-fifty if the length of link 2 is increased to 2. In other words, as the two routes overlap less, they function as more independent routes. In contrast, the change in link lengths leads to no change in flows in the MNL model. The PSL model reacts similarly to the PURC model in this case.

{\tr We summarize the points in the  comparison of the PURC model to the MNL and PSL models in the following table. The comparison speaks clearly in favor of the PURC model.

\begin{table}[ht]
    \centering
    \begin{tabular}{|l|c|c|c|}
    \hline     & PURC & MNL & PSL \\\hline
    
      Choice set generation  necessary& \smiley{} & \frownie{} & \frownie{} \\
      Loops & \smiley{} & \frownie{} & \frownie{} \\
      Assign zero flow & \smiley{} & \frownie{} & \frownie{} \\
      IIA & \smiley{} & \frownie{} & \frownie{} \\
      Overlap & \smiley{} & \frownie{} & \smiley{}\\
       
    \hline    
    
    \end{tabular}
    \caption{\tr Comparison of models for the toy example.}
    \label{tab:comparison}
\end{table}
}

\subsection{Illustration using the Copenhagen metropolitan area road network}

We proceed to illustrate how the perturbed utility model behaves on a large network.  We use a network, shown in Figure \ref{fig:NetworkRoadTypes}, for the Copenhagen metropolitan area, comprising 30,773 links and 12,876 nodes \citep{Kjems2019}.

\begin{figure}[h]
    \centering
    \includegraphics[width= 8cm]{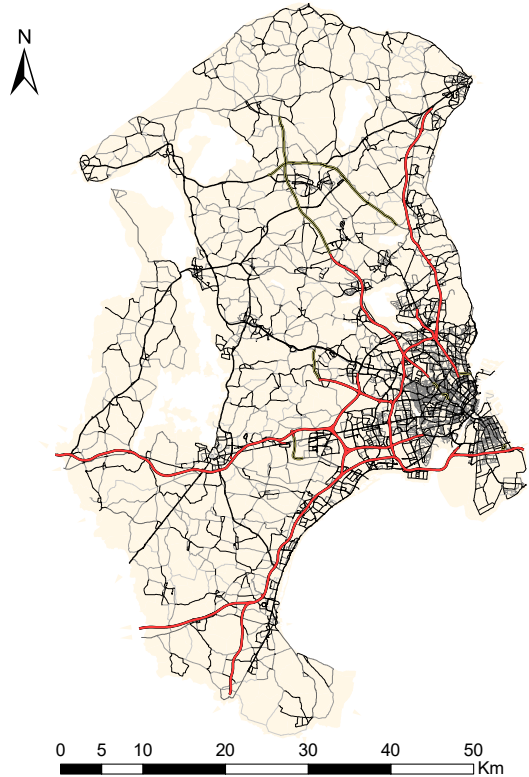}
    \caption{Network of the Copenhagen metropolitan area.}
    \label{fig:NetworkRoadTypes}
\end{figure}

Figure \ref{fig:EffectOfBetaOnFlows} shows the model's prediction for a trip from the airport (located South-East on the figure) to the Technical University of Denmark (North). For the example, we specified the link utility as a parameter $\beta$ multiplied by {\tr pace (defined as travel time in minutes per kilometer)} and used different values of $\beta$.  We solved the traveler's problem using MatLab \citep{MATLAB2010} with the  \texttt{fmincon} command using an interior point algorithm \citep{Byrd1999,Waltz2006} approach for constrained optimization.

Note that most links in the network are inactive, having zero predicted flow. {\tr This is in line with \citet{Rasmussen2017}, who also demonstrated many unused links, but using a model requiring path enumeration.} As in Proposition \ref{prop:substitutes}, the active links and predicted flows can be translated into a set of active routes with corresponding route choice probabilities. 



The model predicts two 
clear main alternatives.
One is the Western motorway bypass across Kalvebod bridge (Kalvebodbroen), and the other is a cluster of alternatives going through the Copenhagen city center. 

{\tr The optimal flow balances the incentive to reduce travel time, determined by the size of $\beta$, against the incentive to distribute flow across more links in the network, determined by the perturbation $F$. As Figure \ref{fig:EffectOfBetaOnFlows} illustrates, the incentive to distribute flow matters relatively more; hence, the number of active routes is high when $\beta$ is numerically small. As  $\beta$ increases numerically, the incentive to reduce travel time receives more weight and hence link usage approaches the shortest (fastest) path. In particular, the use of the cluster of alternatives through the city center is reduced when $\beta$ becomes numerically larger.}

\begin{figure}
    \centering
    \begin{subfigure}[b]{0.49 \textwidth}
     \centering
         \includegraphics[width=\textwidth]{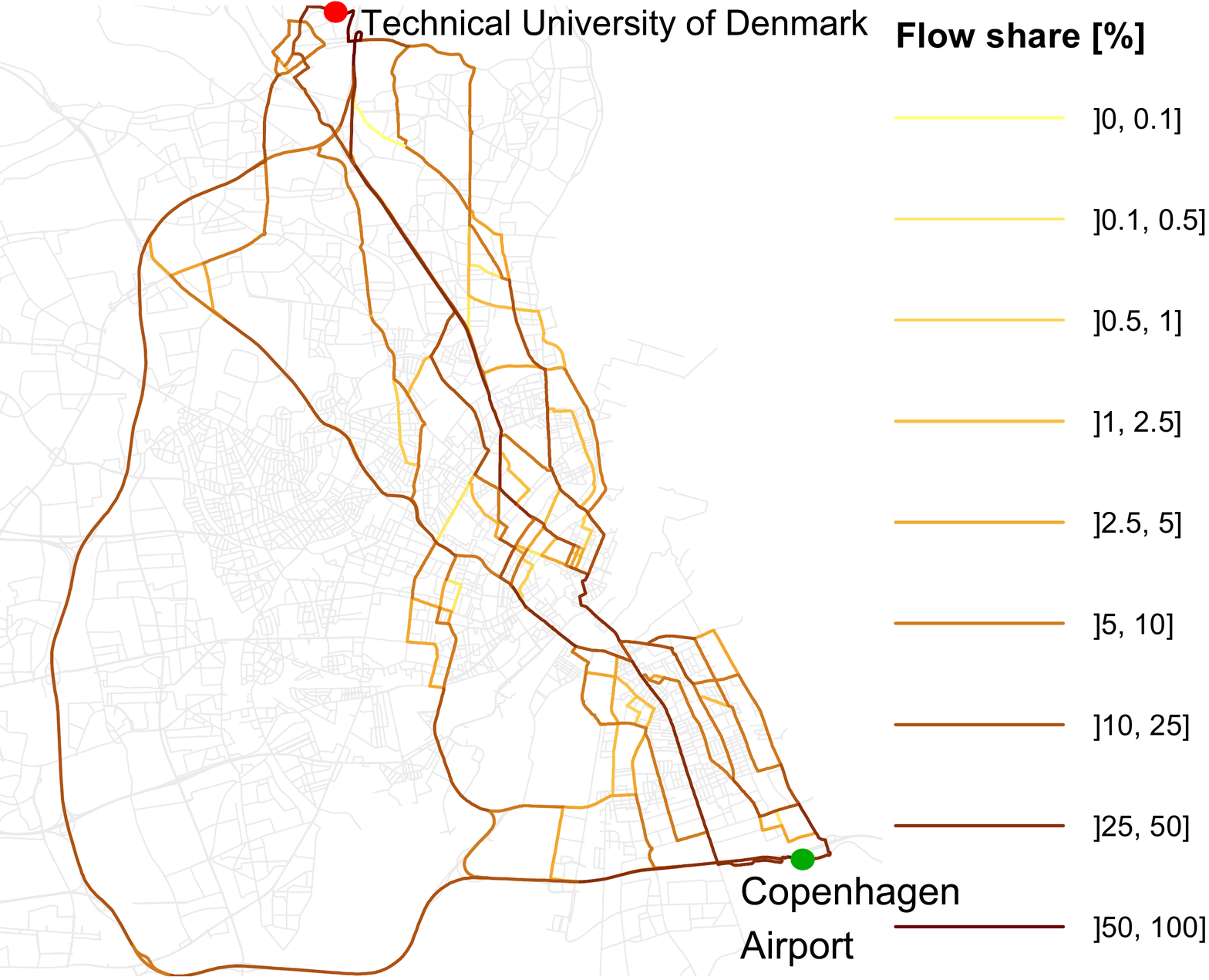}
         \caption{$\beta=-0.5$}
         \label{fig:EffectOfBetaOnFlows_0d5}
    \end{subfigure}
  \hfill
      \begin{subfigure}[b]{0.49 \textwidth}
     \centering
         \includegraphics[width=\textwidth]{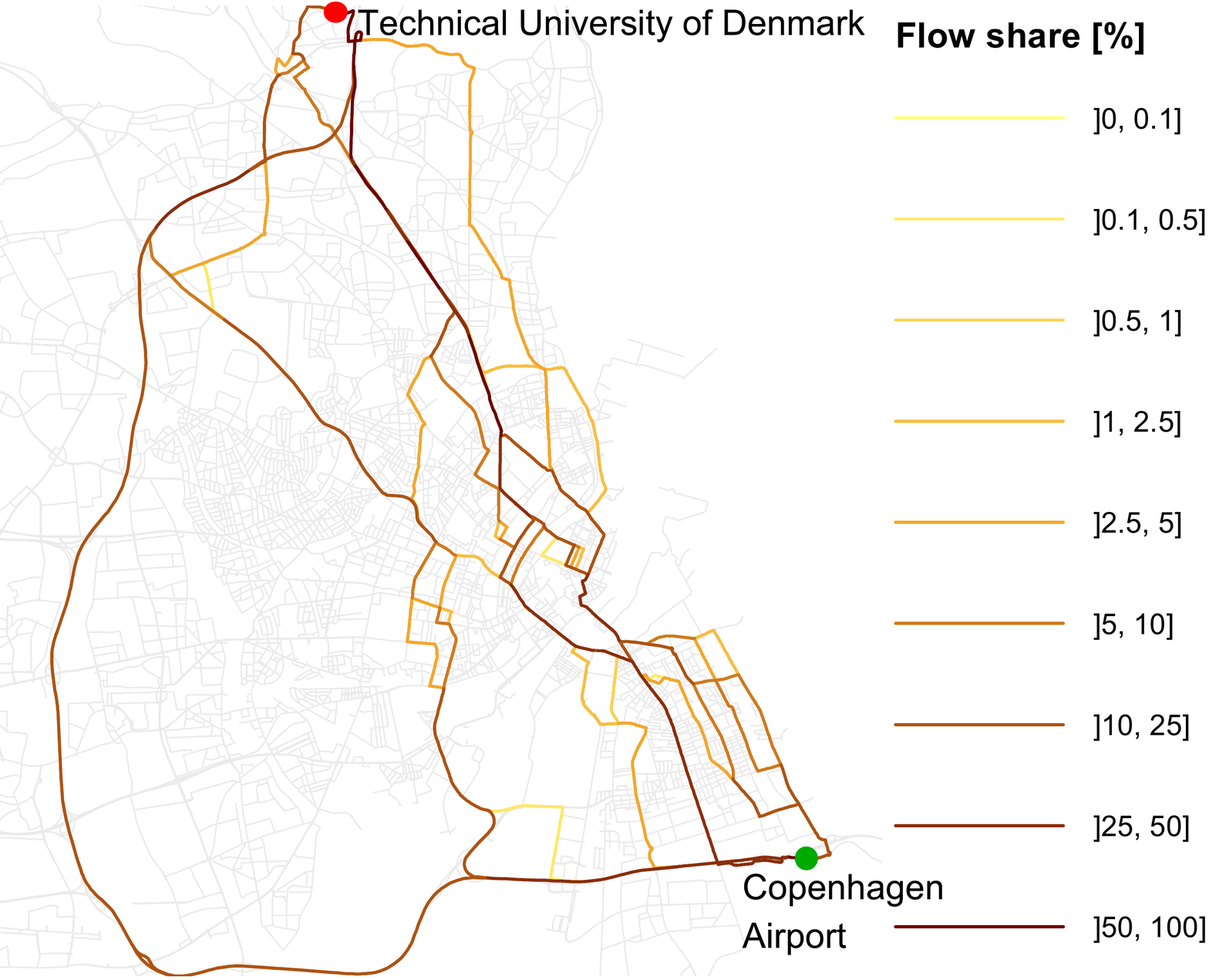}
         \caption{$\beta=-1$}
         \label{fig:EffectOfBetaOnFlows_1}
    \end{subfigure}
\\
    \centering
    \begin{subfigure}[b]{0.49 \textwidth}
     \centering
         \includegraphics[width=\textwidth]{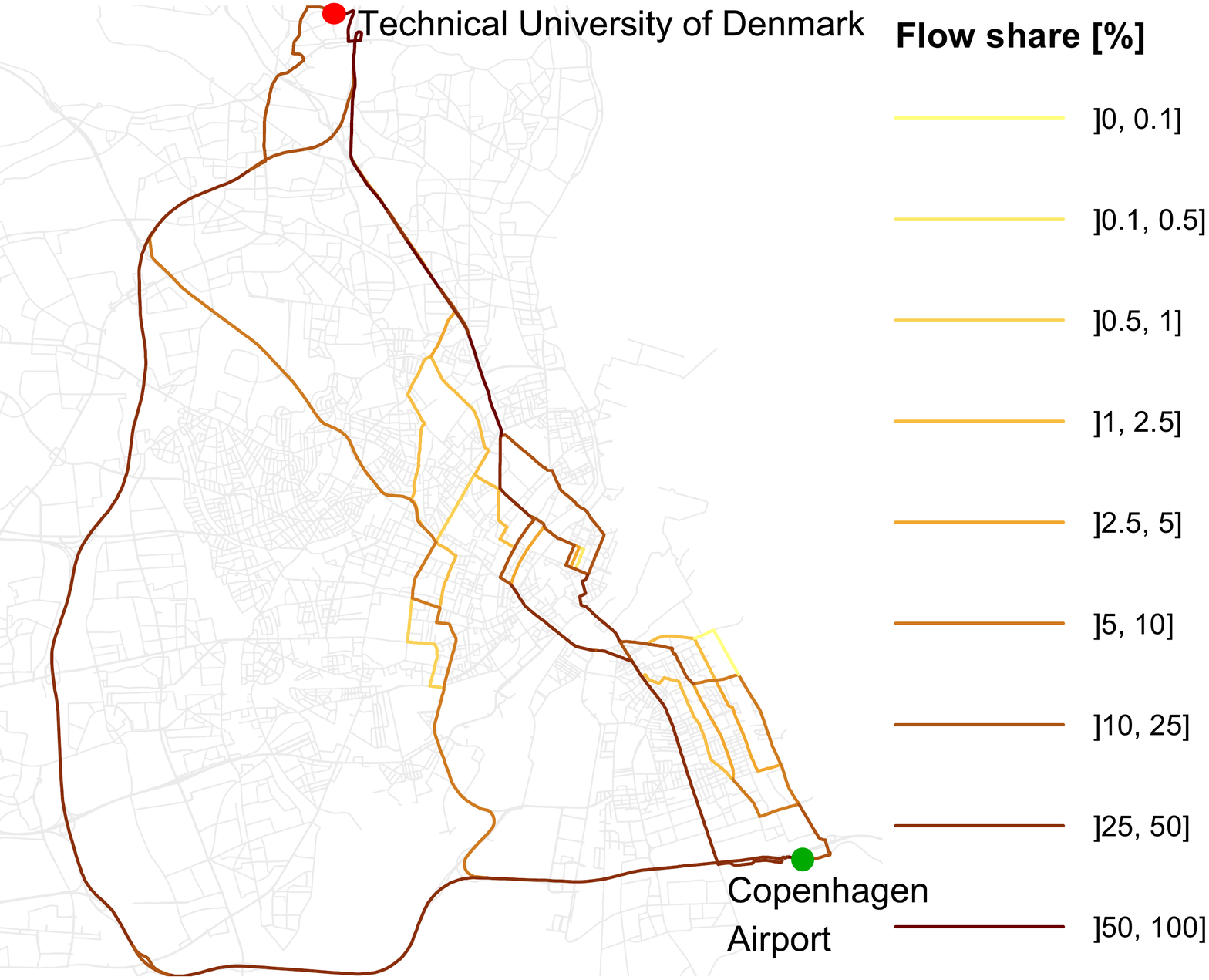}
         \caption{$\beta=-1.5$}
         \label{fig:EffectOfBetaOnFlows_1d5}
    \end{subfigure}
  \hfill
      \begin{subfigure}[b]{0.49 \textwidth}
     \centering
         \includegraphics[width=\textwidth]{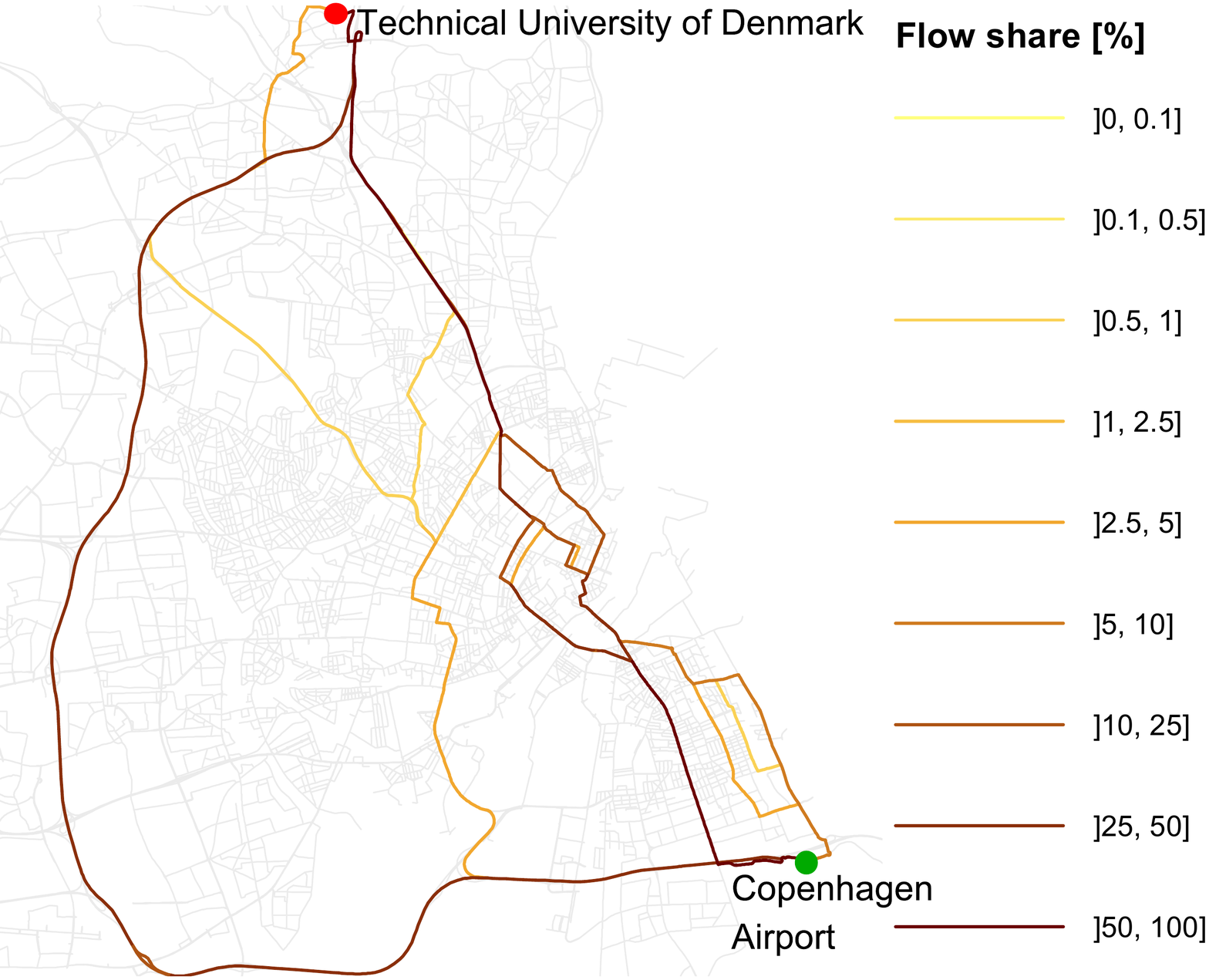}
         \caption{$\beta=-2$}
         \label{fig:EffectOfBetaOnFlows_2}
    \end{subfigure}
    \caption{{\tr Link flows using the PURC model with unit demand} from Copenhagen Airport to Technical University of Denmark for different values of $\beta$.}
    \label{fig:EffectOfBetaOnFlows}
\end{figure}

Figure \ref{fig:CopenhagenSubstitution_Airport_DTU} shows the change in predicted flows that follows an increase {\tr of two minutes in the travel time on Kalvebodbroen in the southern part of the map}. This change increases the cost on all routes that use Kalvebodbroen and consequently flow shifts toward alternative routes. Routes that share links with routes that use Kalvebodbroen are {\tr seen to be }closer substitutes, which confirms the theoretical expectation.

\begin{figure}
    \centering
    \includegraphics[width=0.8\textwidth]{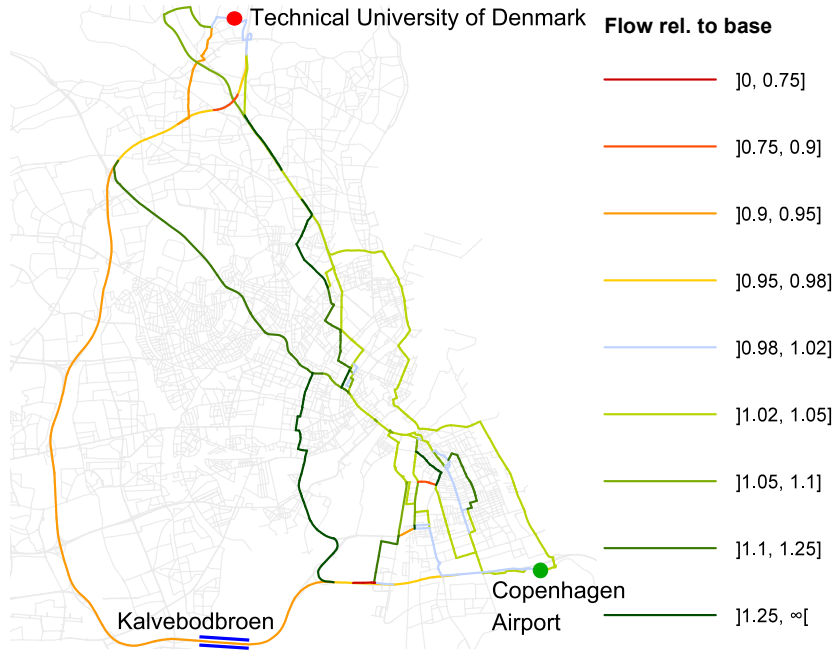}
    \caption{Substitution patterns for trips between Copenhagen Airport and Technical University of Denmark using {\tr the PURC model and} parameters estimated in model C {\tr when} increasing travel time on Kalvebodbroen by two minutes{, \tr relative to the base scenario with no alterations of travel times}.}
    \label{fig:CopenhagenSubstitution_Airport_DTU}
\end{figure}

\section{Testing the estimator on simulated data for Copenhagen}\label{sec:estsimdata}
{\tr Before we proceed to estimate the PURC model on real data, it is useful to explore what precision and bias we might expect. In this section, we therefore test the ability of the estimator to recover a known true parameter from data simulated from the model using the Copenhagen metropolitan area network. We can then assess the quality of the parameter estimates, depending on the size of the dataset. } 

We pre-select 22 nodes {\tr spread across the model area} that are used as origins/destinations. Sampling from these, we generate a number of datasets using a different number of OD combinations (1, 5, 20, and 100) and different values of a parameter for {\tr  pace (minutes per kilometer)
} ($\beta\in\{-3,-2.5,\ldots,-0.5\}$). For each OD and $\beta$, we solve the traveler's problem and sampled a number of {\tr trips }consistent with the predicted link flows.
{\tr Each sampled trip is a random walk between the origin and destination nodes, with link selection probabilities at each node that are proportional to the predicted link flows.} This mimics actual data collection but in a case where the true model is known. We have created datasets sampling, respectively, 25, 100, 250, and 1000 {\tr trips }for each OD.  

{\tr 
Figure \ref{fig:cdfsim} shows cumulative distributions of some summary statistics of the flows underlying the simulation --- that is, the predicted flows found when using the true parameters.  Panel (a) shows how the number of active links increases as the {\tr  parameter $\beta$ for pace} approaches zero and the influence of travel time diminishes. Panel (b) shows that, as expected, a numerically smaller {\tr parameter for pace } increases the average predicted travel time. 

\begin{figure}[h]
    \centering
    \begin{subfigure}{0.49 \textwidth}
        \includegraphics[width = \textwidth]{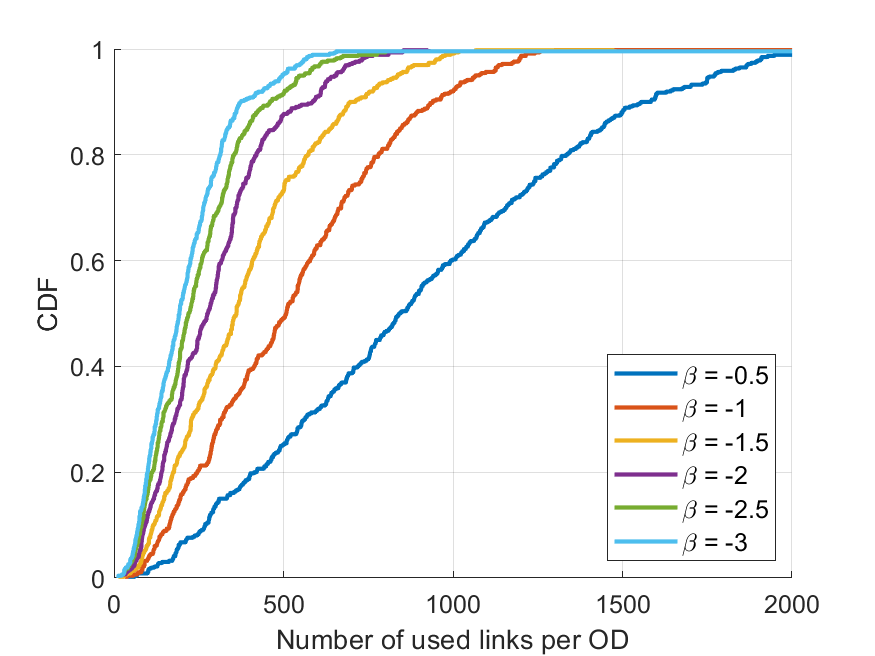}
        \caption{Number of active links per OD}
        \label{fig:UsedLinksPerOD}
    \end{subfigure}
    \begin{subfigure}{0.49 \textwidth}
        \includegraphics[width = \textwidth]{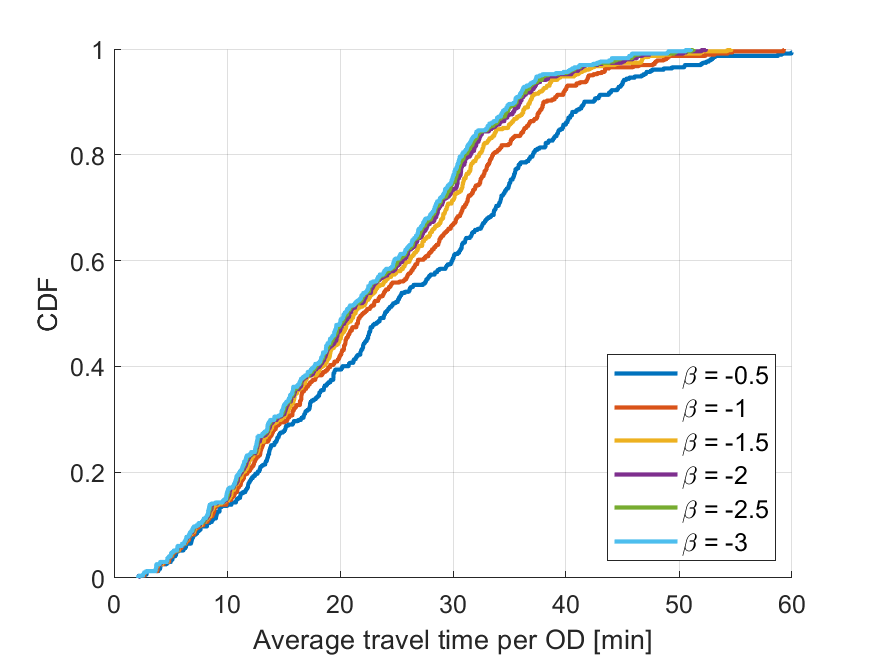}
        \caption{Travel time per OD}
        \label{fig:TripsPerOD}
    \end{subfigure}
    \caption{Cumulative distribution for simulated route choices in Figure {\tr\ref{fig:estimateBetaSim}}. } 
    \label{fig:cdfsim}
\end{figure}

}

Altogether, we have created $4\times 4 \times 6 = 96$ datasets. For each dataset, {\tr treating it as if it were a real dataset of observed route choices}, we have computed a flow vector $x$ for each origin-destination.  
We have then transformed the data as described in Section \ref{sec:reg_eq} and estimated $\beta$ using ordinary least squares (OLS).

\begin{figure}[h]
                \centering
    \begin{subfigure}[b]{0.245 \textwidth}
     \centering
         \includegraphics[width=\textwidth]{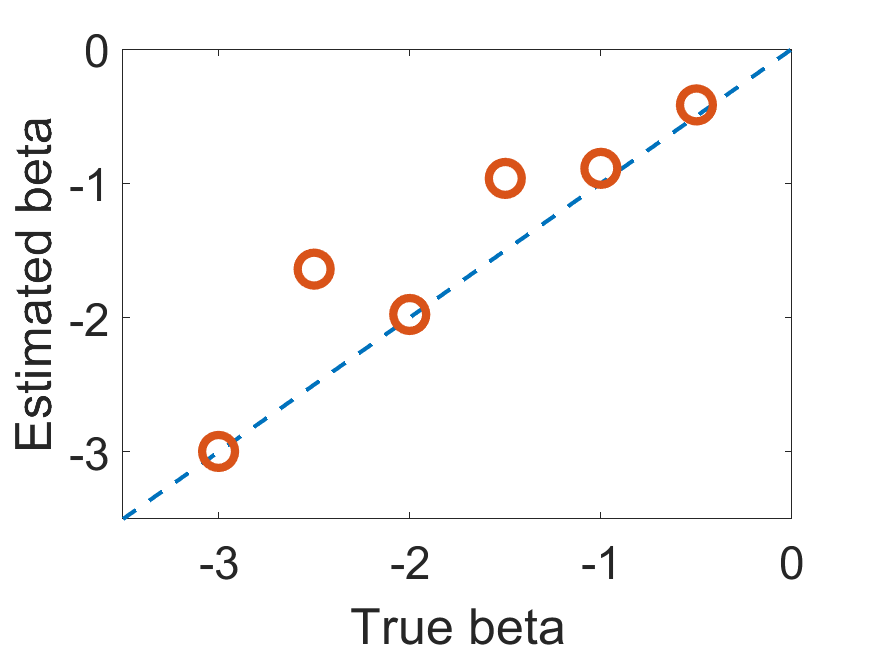}
         \caption{\scriptsize 1 OD, 25 trips per OD}
         \label{fig:Estimations_1ODs_25TripsPerOD}
    \end{subfigure}
  \hfill
      \begin{subfigure}[b]{0.245 \textwidth}
     \centering
         \includegraphics[width=\textwidth]{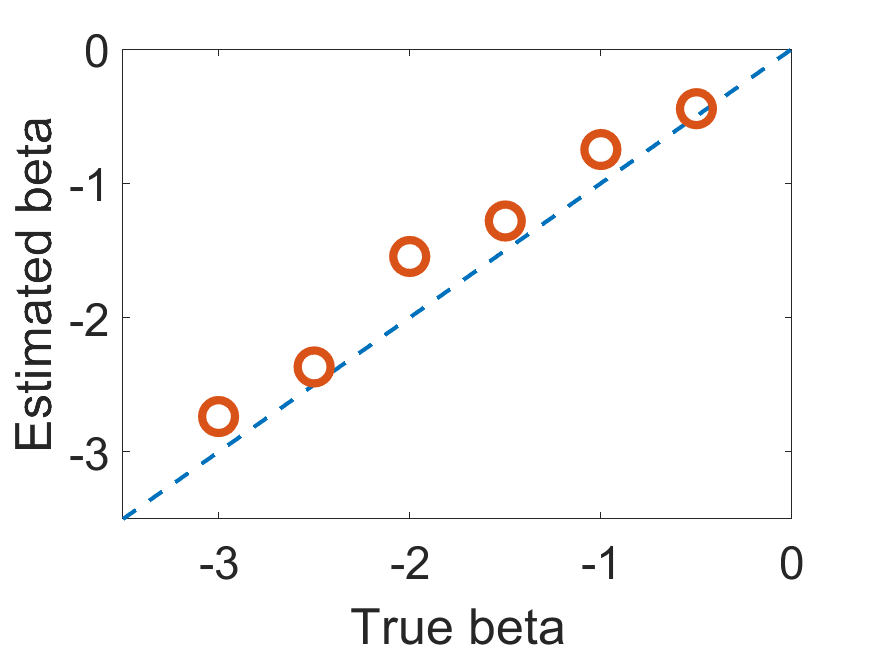}
         \caption{\scriptsize 5 ODs, 25 trips per OD}
         \label{fig:Estimations_5ODs_25TripsPerOD}
    \end{subfigure}
    \begin{subfigure}[b]{0.245 \textwidth}
     \centering
         \includegraphics[width=\textwidth]{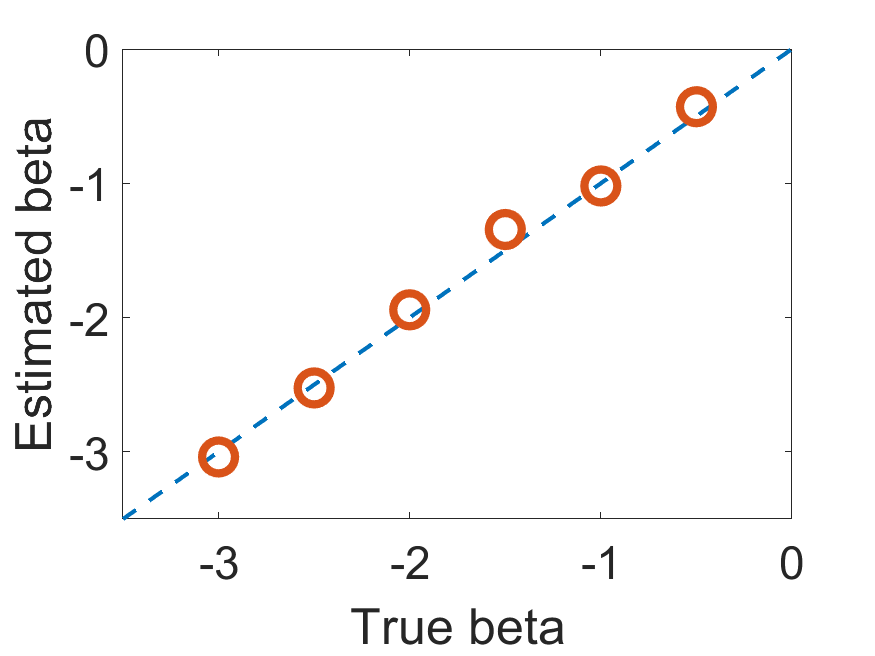}
         \caption{\scriptsize 20 ODs, 25 trips per OD}
         \label{fig:Estimations_2ODs_25TripsPerOD}
    \end{subfigure}
      \begin{subfigure}[b]{0.245 \textwidth}
     \centering
         \includegraphics[width=\textwidth]{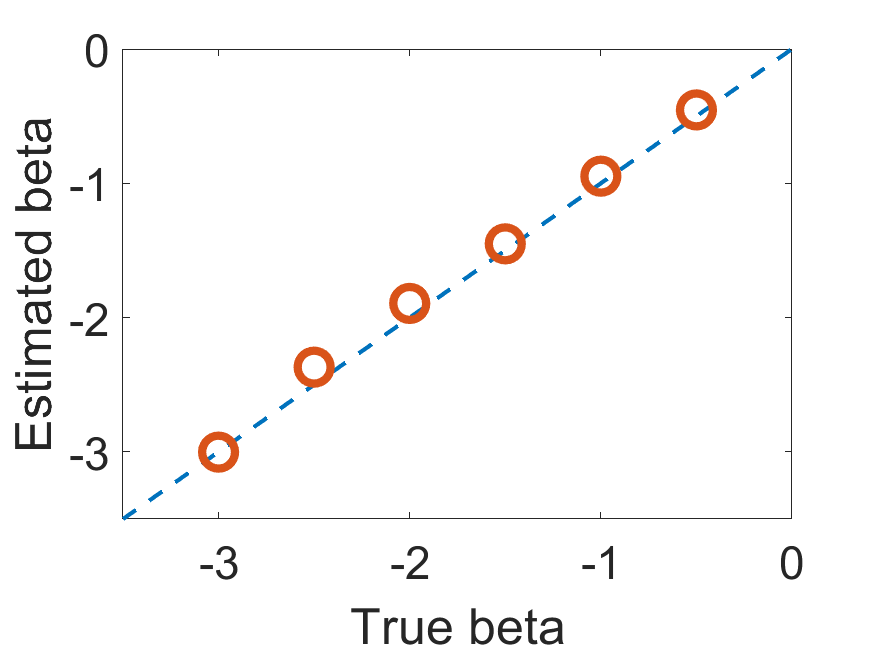}
         \caption{\scriptsize 100 ODs, 25 trips per OD}
         \label{fig:Estimations_100ODs_25TripsPerOD}
    \end{subfigure}
    \\
            \centering
    \begin{subfigure}[b]{0.245 \textwidth}
     \centering
         \includegraphics[width=\textwidth]{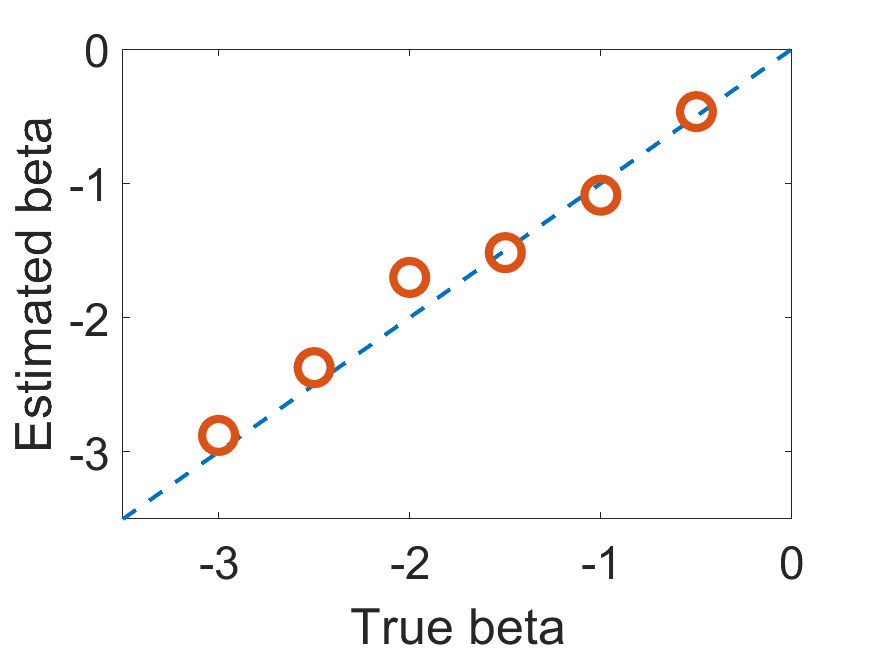}
         \caption{\scriptsize 1 OD, 100 trips per OD}
         \label{fig:Estimations_1ODs_100TripsPerOD}
    \end{subfigure}
  \hfill
      \begin{subfigure}[b]{0.245 \textwidth}
     \centering
         \includegraphics[width=\textwidth]{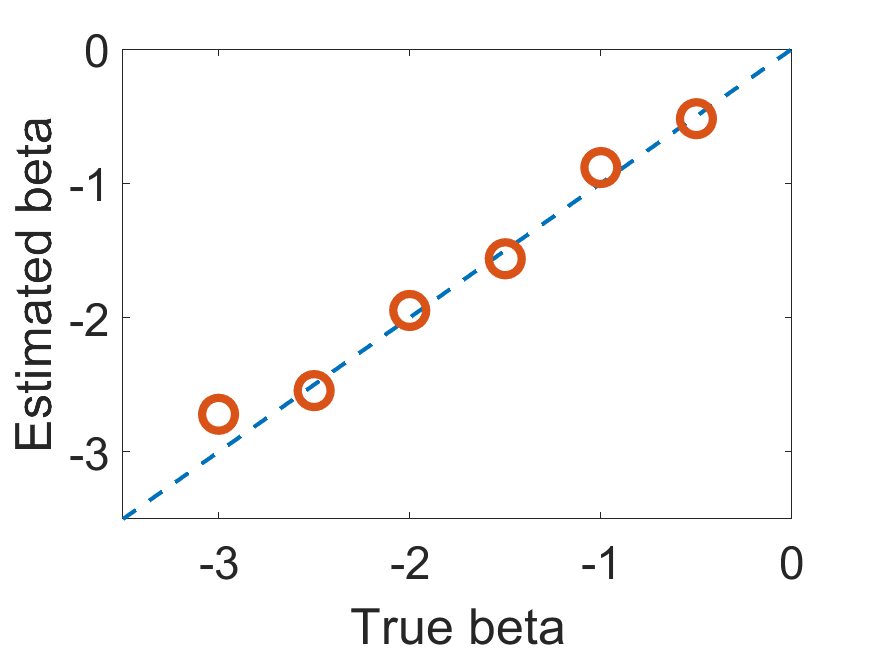}
         \caption{\scriptsize 5 ODs, 100 trips per OD}
         \label{fig:Estimations_5ODs_100TripsPerOD}
    \end{subfigure}
    \begin{subfigure}[b]{0.245 \textwidth}
     \centering
         \includegraphics[width=\textwidth]{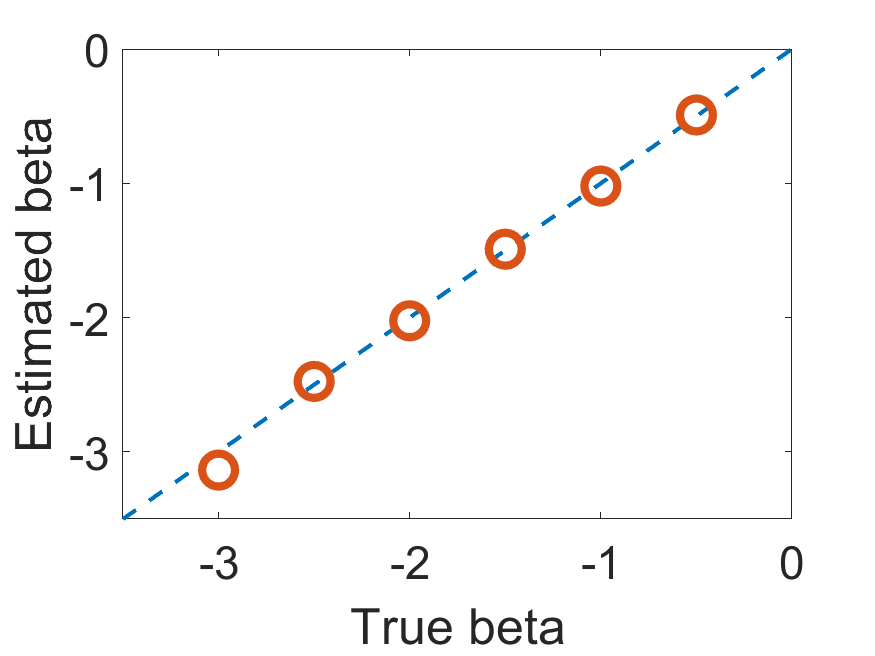}
         \caption{\scriptsize 20 ODs, 100 trips per OD}
         \label{fig:Estimations_20ODs_100TripsPerOD}
    \end{subfigure}
      \begin{subfigure}[b]{0.245 \textwidth}
     \centering
         \includegraphics[width=\textwidth]{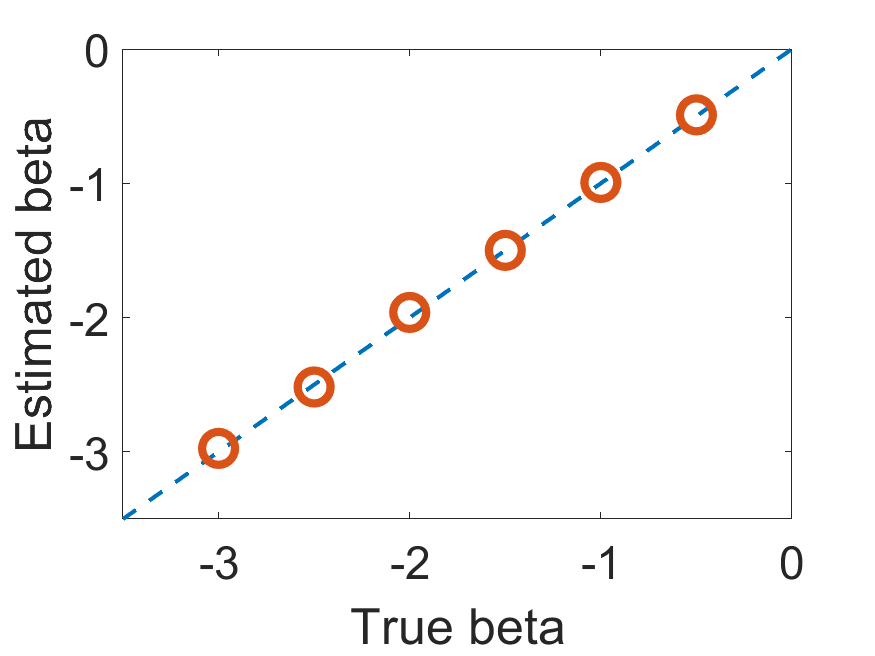}
         \caption{\scriptsize 100 ODs, 100 trips per OD}
         \label{fig:Estimations_100ODs_100TripsPerOD}
    \end{subfigure}
    \\
          \centering
    \begin{subfigure}[b]{0.245 \textwidth}
     \centering
         \includegraphics[width=\textwidth]{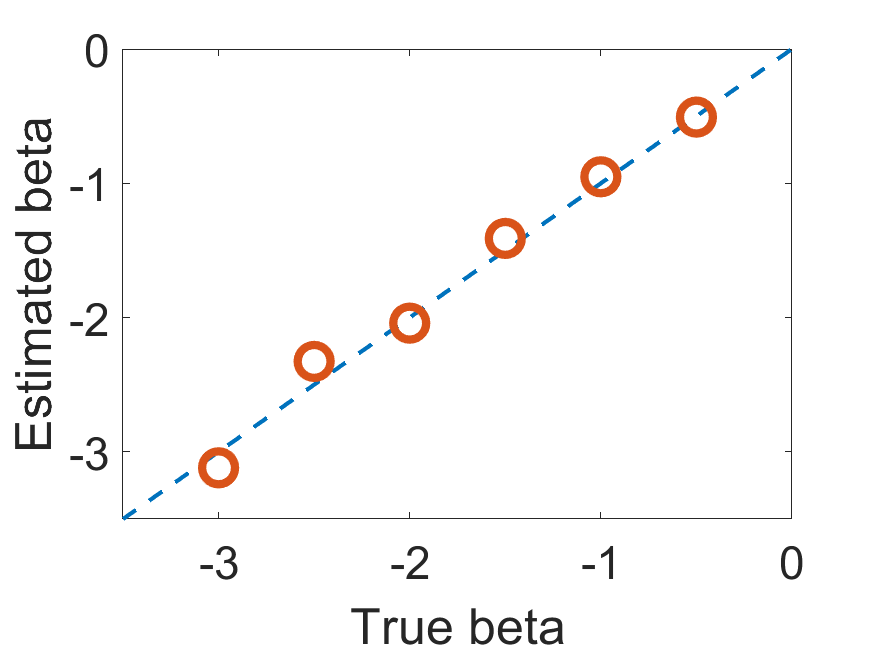}
         \caption{\scriptsize 1 OD, 250 trips per OD}
         \label{fig:Estimations_1ODs_250TripsPerOD}
    \end{subfigure}
  \hfill
      \begin{subfigure}[b]{0.245 \textwidth}
     \centering
         \includegraphics[width=\textwidth]{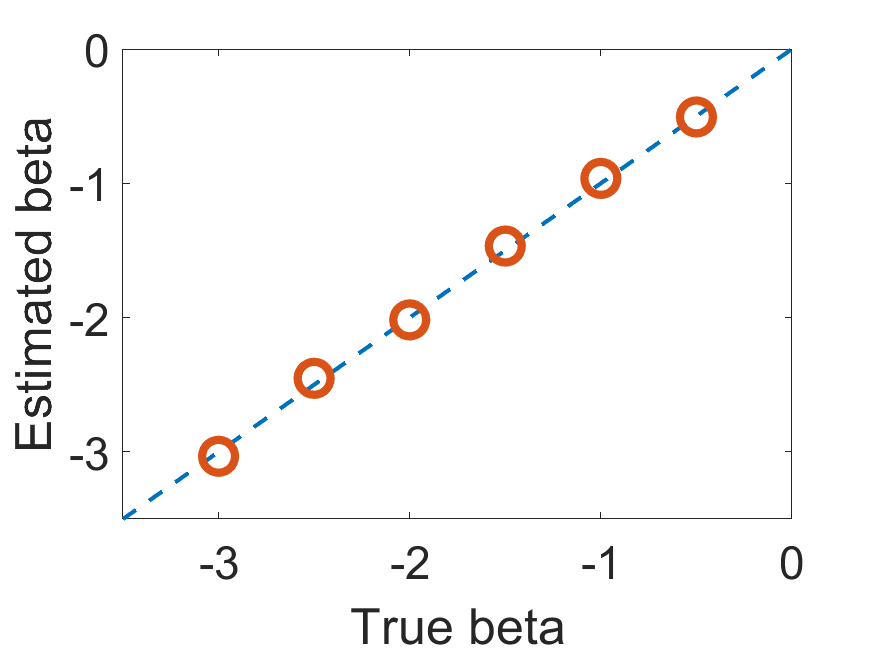}
         \caption{\scriptsize 5 ODs, 250 trips per OD}
         \label{fig:Estimations_5ODs_250TripsPerOD}
    \end{subfigure}
    \begin{subfigure}[b]{0.245 \textwidth}
     \centering
         \includegraphics[width=\textwidth]{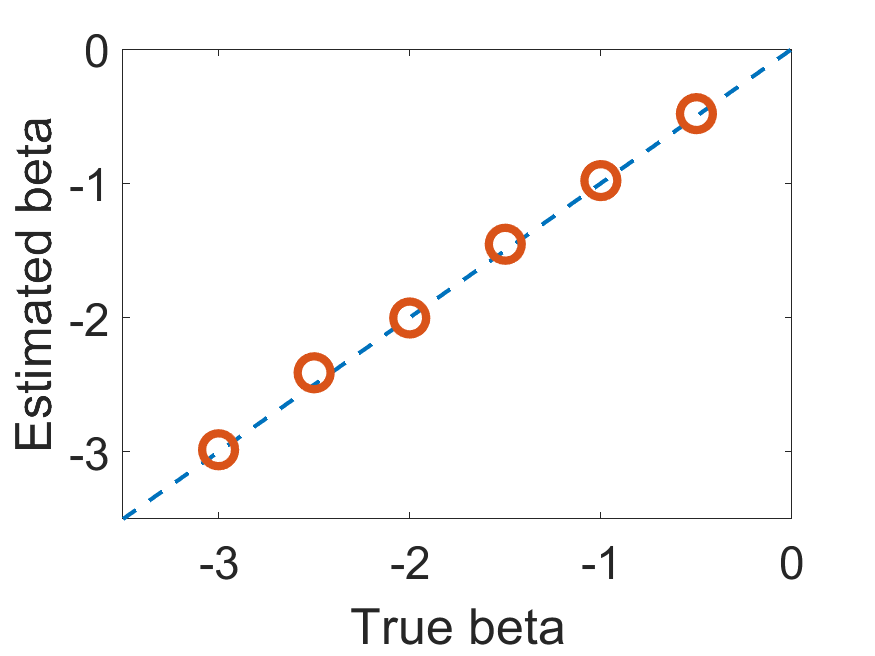}
         \caption{\scriptsize 20 ODs, 250 trips per OD}
         \label{fig:Estimations_20ODs_250TripsPerOD}
    \end{subfigure}
      \begin{subfigure}[b]{0.245 \textwidth}
     \centering
         \includegraphics[width=\textwidth]{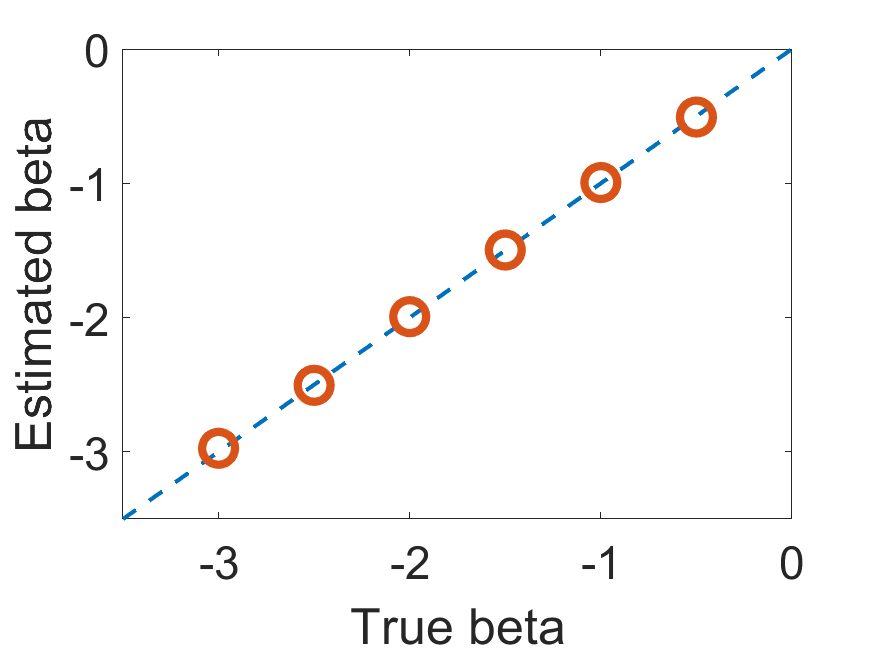}
         \caption{\scriptsize 100 ODs, 250 trips per OD}
         \label{fig:Estimations_100ODs_250TripsPerOD}
    \end{subfigure}
    \\
 
     \centering
    \begin{subfigure}[b]{0.245 \textwidth}
     \centering
         \includegraphics[width=\textwidth]{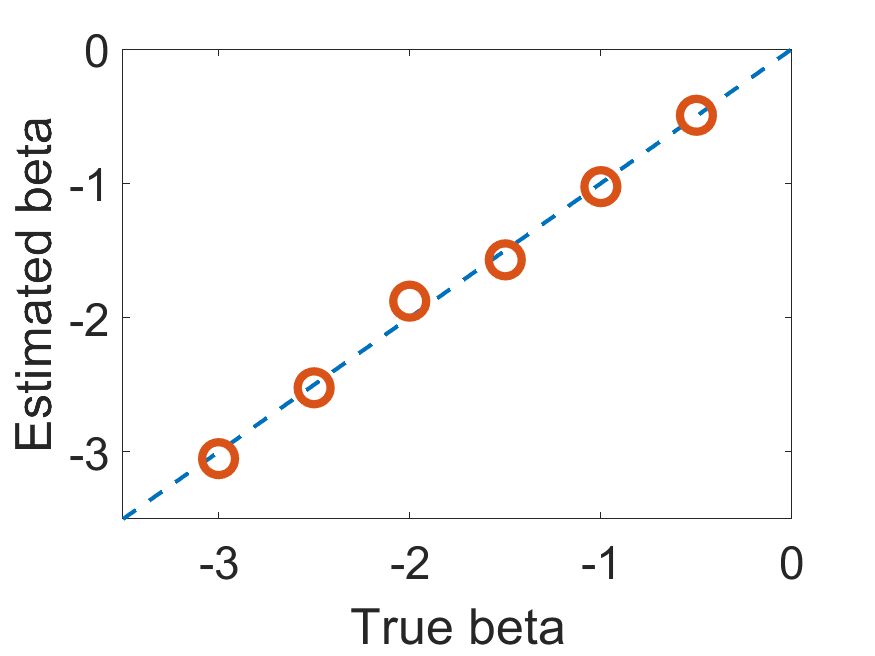}
         \caption{\scriptsize 1 OD, 1,000 trips per OD}
         \label{fig:Estimations_1ODs_1000TripsPerOD}
    \end{subfigure}
  \hfill
      \begin{subfigure}[b]{0.245 \textwidth}
     \centering
         \includegraphics[width=\textwidth]{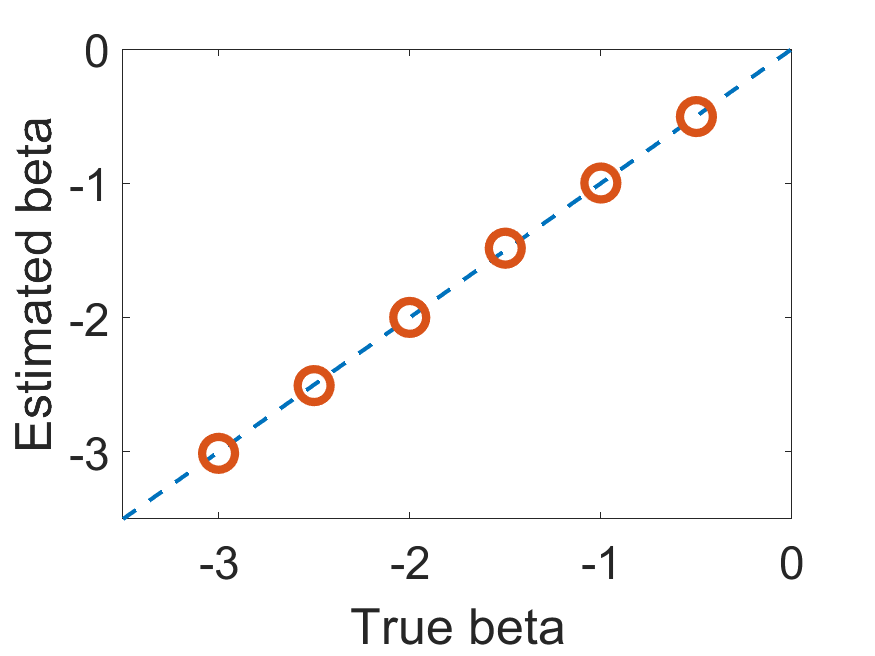}
         \caption{\scriptsize 5 ODs, 1,000 trips per OD}
         \label{fig:Estimations_5ODs_1000TripsPerOD}
    \end{subfigure}
    \begin{subfigure}[b]{0.245 \textwidth}
     \centering
         \includegraphics[width=\textwidth]{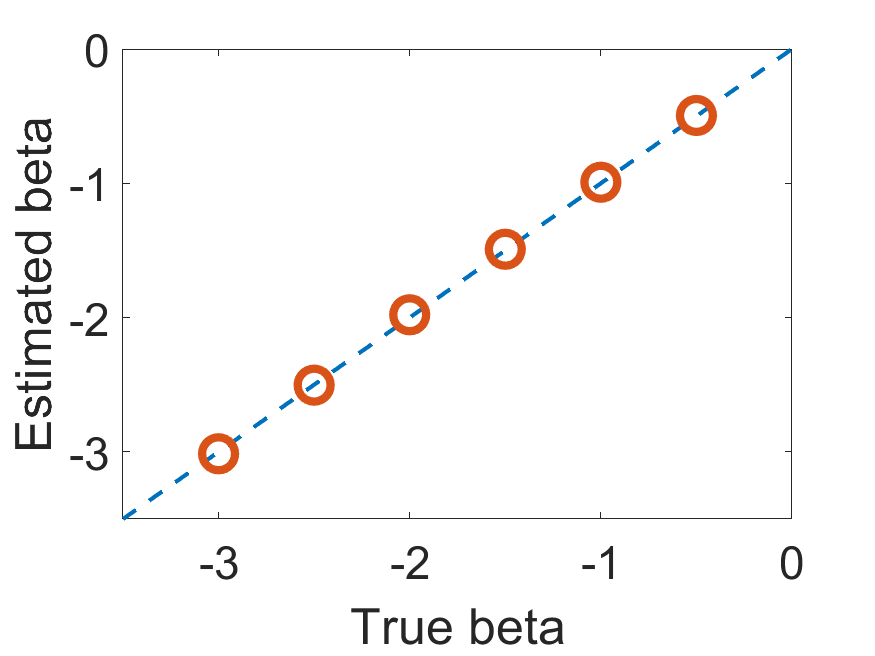}
         \caption{\scriptsize 20 ODs, 1,000 trips per OD}
         \label{fig:Estimations_20ODs_1000TripsPerOD}
    \end{subfigure}
      \begin{subfigure}[b]{0.245 \textwidth}
     \centering
         \includegraphics[width=\textwidth]{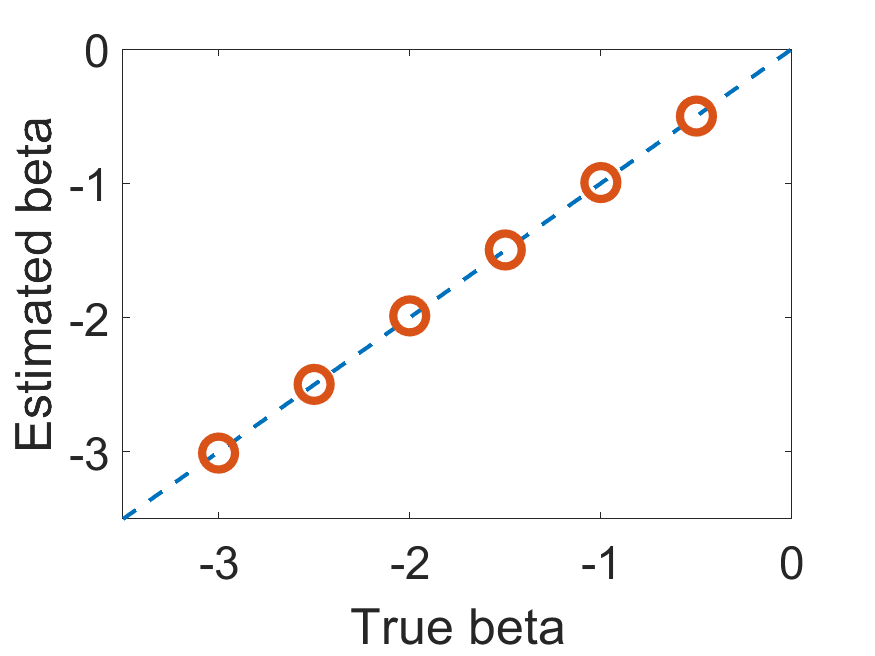}
         \caption{\scriptsize 100 ODs, 1,000 trips per OD}
         \label{fig:Estimations_1O0Ds_1000TripsPerOD}
    \end{subfigure}

    \caption{Parameter estimates against the true parameters for different combinations of the number of trips per OD relation and number of OD relations. }
    \label{fig:estimateBetaSim}
\end{figure}

We have used all observations for all links with positive flow without applying any truncation. This might have led to bias as the function $\ln (1+x)$ that transforms the flows is nonlinear and the flows are estimated with considerable noise. As the results in Figure \ref{fig:estimateBetaSim} show, we find no visible bias when the number of origin-destinations and the number of route choices are sufficiently large. It seems plausible that the reason we do not find any noticeable bias is that the logarithm $\ln(1+x)$ is close to linear for small values of $x$.

Altogether, the results are encouraging. We find that it is possible to recover the true parameter from a sample of routes with good precision. The estimates seem to be biased toward zero, when the dataset is small, but the bias is negligible at moderate dataset sizes.

\bigskip

\section{Empirical application}\label{sec:application}
\subsection{Estimation}
{\tr Our empirical application is based on GPS vehicle trajectory data for the Copenhagen metropolitan area. As described in Appendix  \ref{app:DataProcessing}, we obtain an estimation dataset comprising 8,046 OD combinations and 1,337,096 trips. There are 166.2 trips per OD relation and 153.4 active links on average per OD relation, such that the transformed dataset corresponding to eq. \eqref{eq:OLS} has  1,234,289 observations.  }

To explain the observed route choices, {\tr we specify the utility rate using average observed pace (in minutes per kilometer) for each link},\footnote{A total of 9,464,883 trips were used for determining the empirical average link {\tr paces}.} the number of outlinks, and dummies for road type. Table \ref{tab:Models} reports the estimation results for three different model specifications. Numbers in parentheses are robust standard errors. As can be seen, all parameters are very precisely estimated. {\tr This precision is reassuring, but unsurprising given  the large number of observations entering the regression. }

{\tr Model A has the simplest specification of the utility rate, with just a term for pace. Comparison with the simulations in Figure \ref{fig:EffectOfBetaOnFlows} suggests that the range is reasonable. }

In model B, we extend with a constant for each link having at least two outlinks. This penalizes links that lead into intersections. The {\tr pace} parameter is reduced in absolute value, as expected. {\tr The adjusted R-square increases by 0.006, indicating  preference for model B over model A.}

In model C, we interact the {\tr pace}  variable with link type dummies to estimate {\tr pace} parameters that are specific to each link type. The sign and relative sizes of these parameters make intuitive sense. {\tr The adjusted R-square increases now by 0.04 relative to model B, indicating preference for model C over model B. }

\begin{table}[h]
\begin{tabular}{|l|rrr|}
\hline
 & Model A & Model B & Model C \\ \hline 
$\text{Constant}_{\text{Outlinks}\,\geq\,2}$   & -- & -0.03428 \textit{(0.00030)}  & -0.01546	 \textit{(0.00030)} \\
{\tr Pace} [min/km]& -0.74642	\textit{(0.00171)} & -0.63773		\textit{(0.00225)}  &  --\\
{\tr Pace} [min/km]: & & & \\
\qquad\textit{Motorways} & -- & -- & -0.35011	\textit{(0.00342)}\\
\qquad\textit{Motorway ramps}  & -- & -- &  -0.55419	\textit{(0.00437)}\\
\qquad\textit{Motor traffic roads} & -- & --  &  -0.56233	 \textit{(0.00464)}\\
\qquad\textit{Other national roads} & -- & --  &  -0.59969	 \textit{(0.00297)}\\
\qquad\textit{Urban roads} & -- & --  &  -0.56917	\textit{(0.00219)}\\
\qquad\textit{Rural roads }& -- & --  &  -0.77783	\textit{(0.00308)}\\
\qquad\textit{Smaller roads }& -- & -- & -0.53003	\textit{(0.01101)}\\
\qquad\textit{Other ramps} & -- & -- &  -0.44558	\textit{(0.00448)}\\ \hline
Adjusted $R^2$ & 0.36288 & 0.36855  & 0.40880 \\\hline
\end{tabular}
\caption{Estimation results, based on  1,234,289 observations corresponding to eq. \eqref{eq:OLS}. } \label{tab:Models}
\end{table}

{\tr $R^2$ values around $0.4$ are quite satisfactory considering that we are dealing with cross-sectional data, transformed as described in Section \ref{sec:reg_eq}. We present further validation of model C in Section \ref{sec:validation} below. }

{\tr We have estimated the same models using $F(x)=x^2$ instead of the modified entropy formulation \eqref{eq:modentropy}. The fit was better with the modified entropy formulation in all three cases.

\begin{table}[h]
    \centering
    \tr
    \begin{tabular}{|l|rrr|}
    \hline 
        Model & A & B & C \\ \hline
        Modified entropy & 0.36288 & 0.36855  & 0.40880 \\
        Quadratic & 0.36249 & 0.36695 & 0.40403\\ \hline 
    \end{tabular}
    \tr
    \caption{\tr Adjusted $R^2$ values for modified entropy and quadratic perturbation functions.}
    \label{tab:my_label}
\end{table}
}

\subsection{Validation}\label{sec:validation}

We go on to validate the prediction of model C against our data. For each OD combination in the data, we compute the predicted flow vector using the estimated parameters. We compare this prediction to the estimation dataset to ascertain the ability of the model to reproduce these data. 
The main text shows validation results using the full sample. We have a large dataset and a model with few parameters, so the results are not likely to be very vulnerable to over-fitting. However, we test out-of-sample predictions in Appendix \ref{app:DataProcessing:Dividing}, where we have split the data at random by OD combination into two equal-sized datasets. We have estimated the model for each sample split and applied each set of  estimated parameters to make predictions for the other half of the sample. The parameter estimates for the two sample splits found in Appendix \ref{app:OutOfSample} are very similar and the out-of-sample prediction test did not reveal any issues.

Figure \ref{fig:ValTotalLinkFlows} shows the main validation result, comparing observed link flows to the predicted flows. {\tr For each link, the observed link flow is the number of observed vehicles on that link in the observation dataset across all OD combinations, whereas the predicted link flows are found by multiplying the number of observed trips for each OD combination with the corresponding predicted link flow for a representative traveler for that particular OD combination before summing across all OD combinations. We observe that predictions are generally close to the $45^\circ$ line. A nonparametric fit with 95\% confidence bands suggests some systematic deviations.  The adjusted $R^2$ between the observed and predicted link flows is 0.9356.\footnote{Defined as $R^2_\text{adj}= 1 -  \left(
1-\dfrac{\sum_{i=1}^N \left( \hat{x}_i - x_i  \right)^2}{\sum_{i=1}^N\left(x_i - \bar{x}\right)^2} \right) \frac{1-N}{1-p-N}$, with $x_i, i = 1, 2, ..., N$ being the observed link flows, $\hat{x}_i, i = 1, 2, ..., N$ being the predicted link flows, $\bar{x}= \sum_{i=1}^N x_i$ being the mean of the $N = 30,772$ observed link flows, and $p=9$ being the number of estimated parameters.}   We do find the fit to be very satisfactory, as we have employed no calibration: the predictions are driven solely by the nine estimated parameters in Table \ref{tab:Models}. Our general conclusion is that the model is able to meaningfully predict the observed flows.}

\begin{figure}[H]
    \centering
    \includegraphics[width=0.7\textwidth]{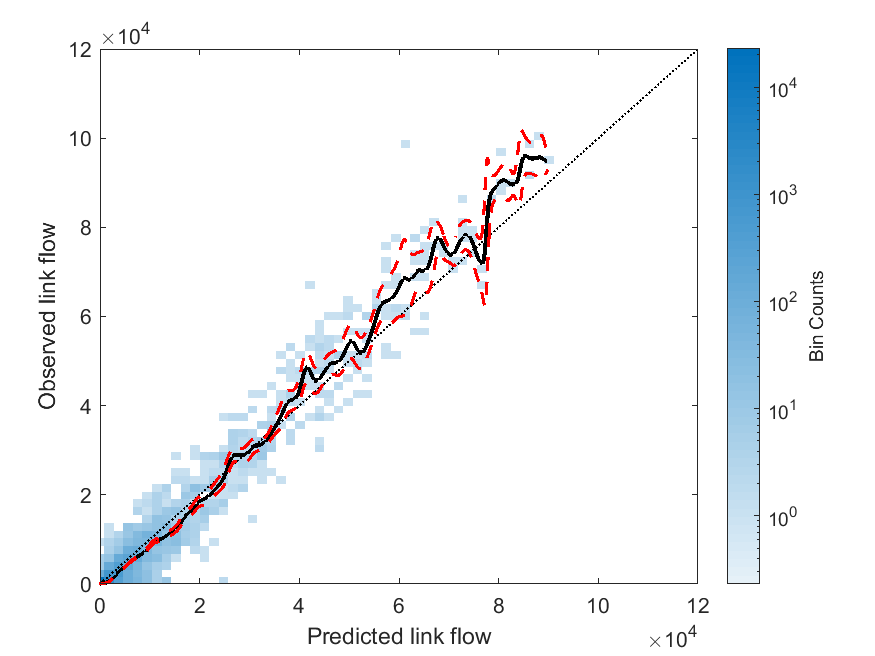} 
    \caption{Observed vs. predicted total link flows when summing across all OD combinations.}
    \label{fig:ValTotalLinkFlows}
\end{figure}

\begin{figure}[H]
    \centering
    \includegraphics[width= \textwidth]{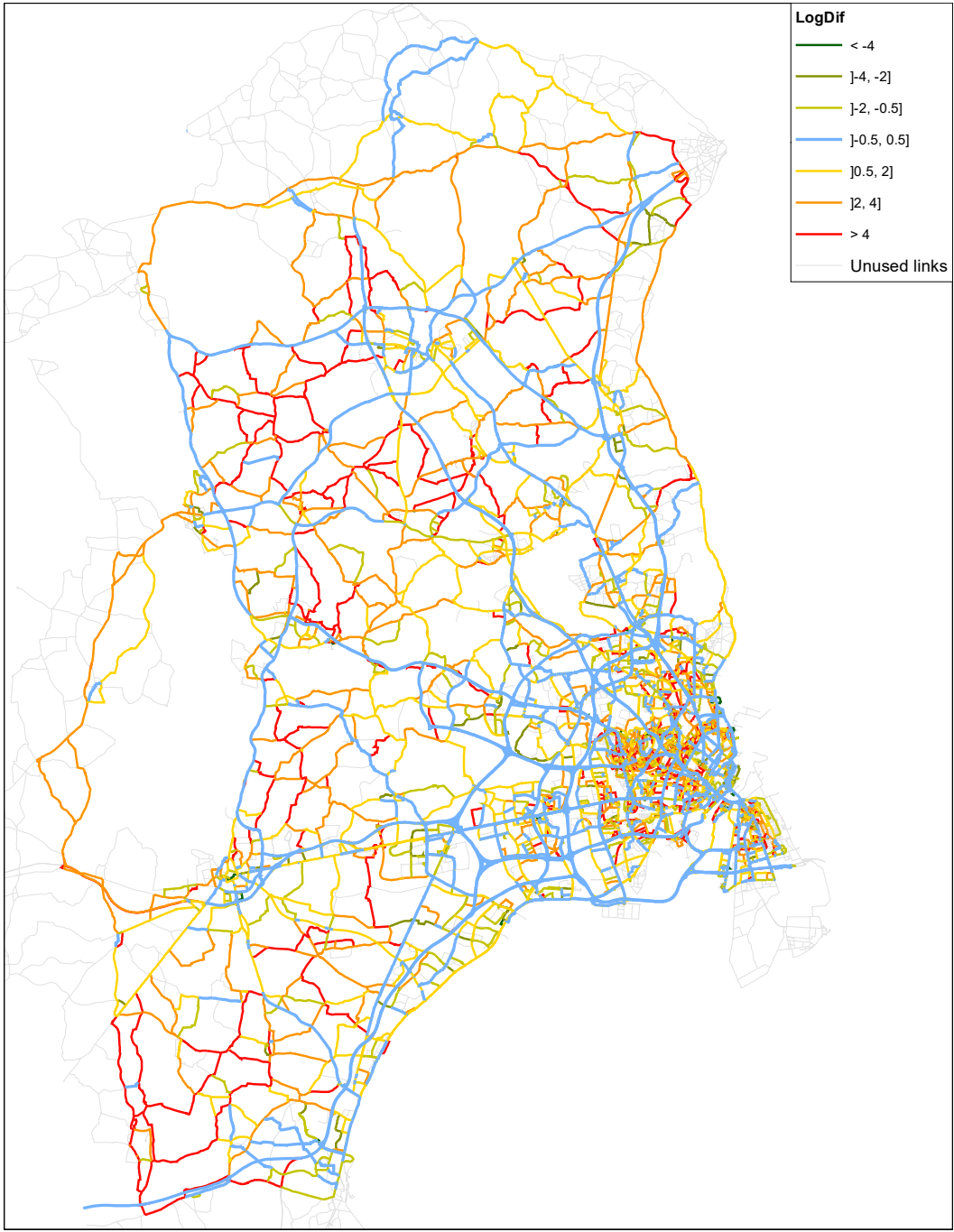}
    \caption{Regional difference map. Differences given as $\ln(q^\text{predicted} + 1)- \ln (q^\text{observed}+1)$.}
    \label{fig:ValMap_Regional}
\end{figure}

\begin{figure}[H]
    \centering
    \includegraphics[width= \textwidth]{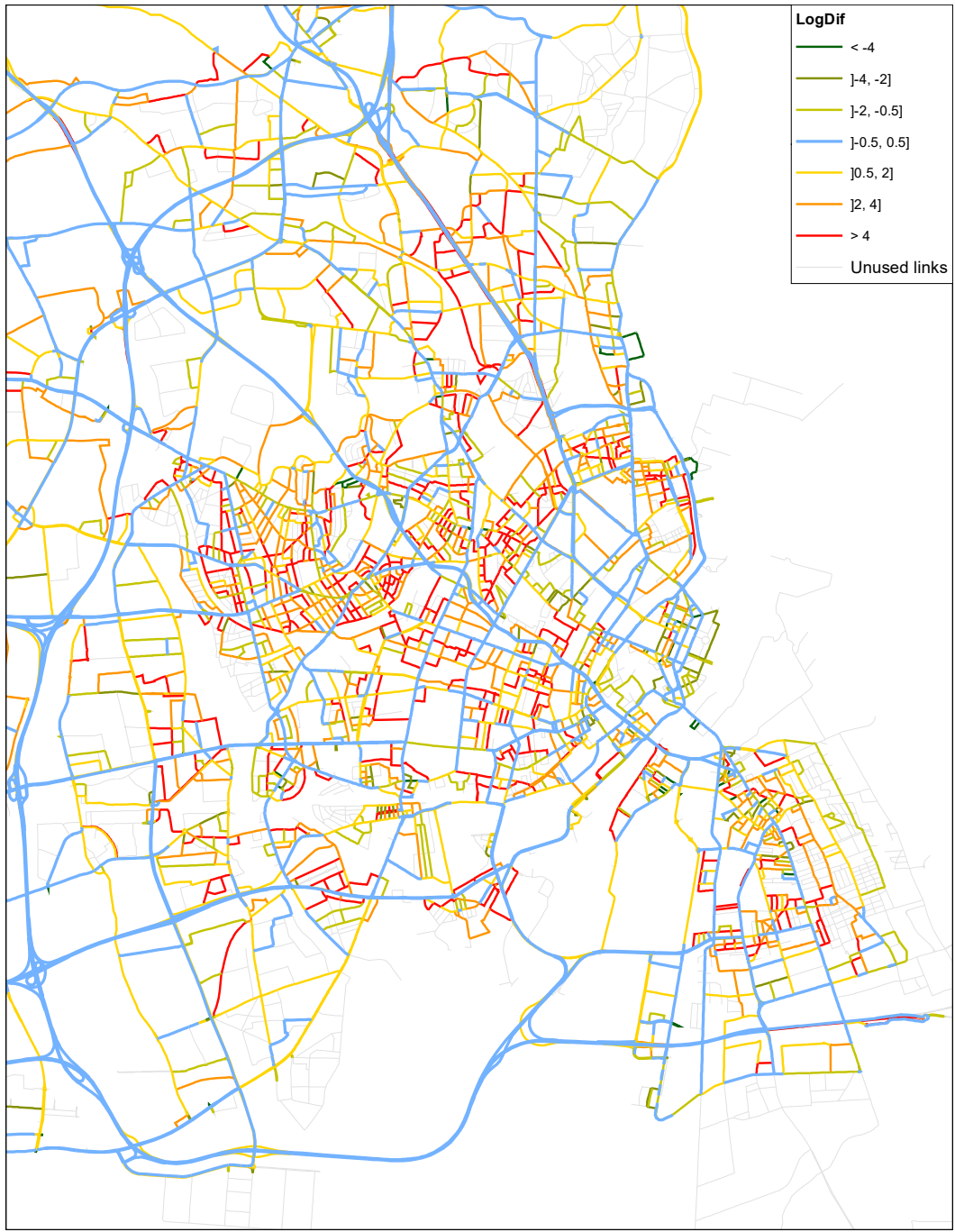}
    \caption{City-wide difference map. Differences given as $\ln(q^\text{predicted} + 1)- \ln (q^\text{observed} + 1$).}
    \label{fig:ValMap_City}
\end{figure}

Figures \ref{fig:ValMap_Regional} and \ref{fig:ValMap_City} visualize the comparison between predicted and observed flows on a map of the network. Blue links are well predicted, the model predicts too much traffic on red links and too little on green links compared with the observed flows. We observe that the main road network is generally blue on the figure, whereas red links tend to be minor roads. We find this result to be quite {\tr reasonable}, indicating {\tr again} that the estimated model is able to predict a large share of the variation in the observed link flows.

\begin{figure}[H]
    \centering
    \includegraphics[width=0.7 \textwidth]{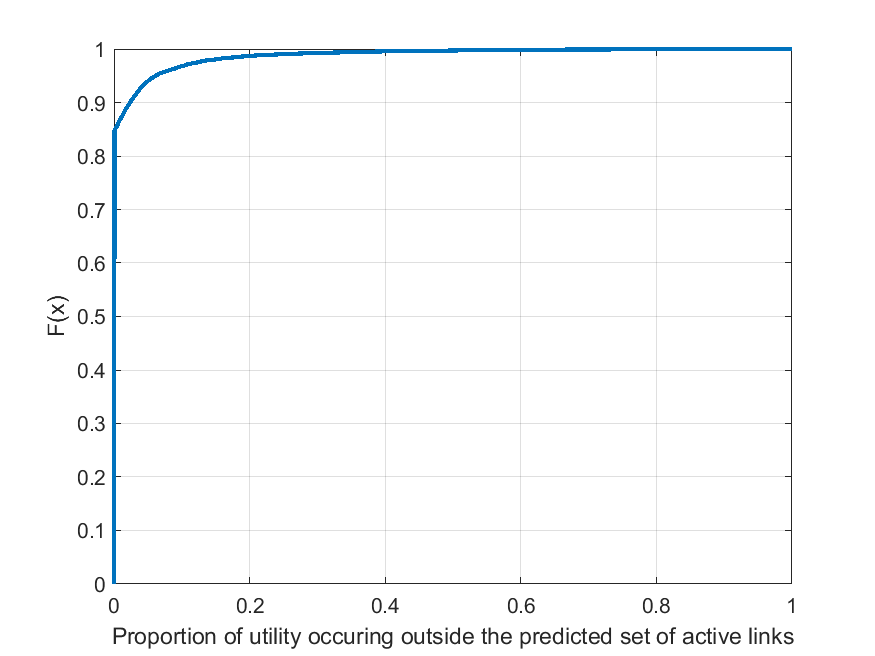}
    \caption{Cumulative distribution function of the proportion of utility of the observed trips outside the predicted set of active links.}
    \label{fig:ValUnused}
\end{figure}

An issue that has troubled modeling approaches that rely on a predefined consideration set is that the observed routes may not be covered by the consideration set. In contrast, the perturbed utility route choice model takes all potential routes into account and predicts a set of active links. {\tr
The model predicts 12,133 unused links with a total length of 4,253 km. The observed  data comprises 12,063 unused links with a total length of  4,719 km. The overlap between the two sets of unused links is 79\%, showing that the vast majority of the unused links is correctly predicted.}

{\tr In addition, we analyze how  well the predicted sets of active links, one set for each OD combination, cover the chosen  routes.} Figure \ref{fig:ValUnused} shows the cumulative distribution of the proportion of utility of the chosen routes that is outside the predicted set of active links. Hence, a value of zero is perfect and small values are good. We find that about 85\% of the observed trips are completely inside the predicted set of active links, and almost all observed trips have less than 20\% of utility outside the predicted set of active links.

\section{Conclusion} \label{sec:conclusion}
This paper has opened the door to a new way to model route choice. The perturbed utility route choice model can be estimated using just linear regression, {\tr  it requires no choice set generation, }it is straightforward to use for assignment, and it produces reasonable substitution patterns. We have estimated and applied the model to a large dataset covering the Copenhagen metropolitan area and found that the model is able to predict much of the variation in observed flows, thereby demonstrating the feasibility of applying the model in practice. 

We do not claim to have found the perfect model. However, we do hope to have convinced the readers that the perturbed utility route choice model is a worthwhile alternative to previous existing approaches. There are many ways to go forward from here.  

We have ignored sampling error in the observation of flows. Our simulations suggest that this is not a problem when the number of observations is sufficiently large. Further work might incorporate weighting to reflect the number of observations used for estimating flows. It would also be of interest to be able to estimate the model with individual-level data, perhaps adapting the iteratively reweighted least squares algorithm of \citet{Nielsen2021}. {\tr This might be relevant when using the perturbed utility route choice model with smaller datasets, where sampling error might be a more prominent issue.} 

We have worked with a static and deterministic network. It might be possible to adapt the perturbed utility route choice model to more complex settings.

{\tr An interesting question is the shape of the perturbation function. There is considerable freedom within the requirement imposed on $F$, and there are many other options in addition to the two we have tested. The effect of using other perturbation functions remains an open issue.}  

Another direction that could be explored is to add cross-link interactions as a means to control the substitution patterns of the model. One way to go might be along the lines of the nested recursive logit model \citep{Mai2015}. 

Finally, we note that our model seems to fit well into the equilibrium assignment in \citet{Oyama2020}. This research direction also seems appealing.

\pagebreak
\appendix{}\label{appendix}

\section{Data processing} \label{app:DataProcessing}

The raw data comprise 472,801,802 GPS points corresponding to 9,464,883 trips. The data cover a three-month period from June 1, 2019 to August 31, 2019.

\subsection{Map-matching} \label{app:DataProcessing:MM}

The data were map-matched to the network using a branch-and-bound-based algorithm proposed by \cite{Nielsenetal2007}.  This led to 8,039,296 map-matched trips. Some trips were only partially map-matched; we retained 5,320,215  trips for which the whole trajectory could be matched successfully. 

\subsection{Trimming trips} \label{app:DataProcessing:Trimming}

To ensure that we have sufficient trips with OD combination in common, we devised an algorithm to trim the start of trips  such that they begin in a smaller set of nodes in the network. We used the same algorithm to trim the end of trips to arrive at a smaller set of common destination nodes.

The algorithm is based on a sample of trips $\mathcal{T}$, where each  trip $t \in \mathcal{T}$ is the sequence of nodes visited $t=(v^t_1,\ldots,v^t_{n_t})$. 
Denote by $\mathcal{T}_v$ the set of trips that passes through node $v$, and let $k_\ast^t$ denote the first occurrence of a selected node for trip $t \in \mathcal{T}$. The algorithm iteratively trims trips in a given set of trips until the number of unique origin nodes is reduced to a prespecified number  $N_O$. It comprises the following steps. 

\begin{enumerate}
    \item Initialize the set of chosen origin nodes, $\mathcal{V}_O$, as the empty set, $\mathcal{V}_O = \emptyset$, and set $k_\ast^t = n^t$ for all $t \in \mathcal{T}$.
    \item Calculate the trip score $S^t_{v^t_k}$ for every node $v^t_k, k = 1,\ldots,n_t$ of every trip $t \in \mathcal{T}$ as the number of remaining nodes of that trip: $S_{v^t_k} \gets k_\ast^t - k.$
   \item Calculate the score of each node  $v \in \mathcal{V}$ as the sum of the trip scores across trips: $S_v \gets  \sum\limits_{t \in \mathcal{T}_v}S^t_v$.
   \item Add the highest scoring node $v^\ast = \argmax\limits_{v \in \mathcal{V}} S_v$ to the set of selected nodes: $\mathcal{V}_O \gets \mathcal{V}_O \cup v^\ast $.
   \item For each trip $t \in \mathcal{T}$ for which $v^\ast \in t$, (potentially) update the first occurrence of a selected node by $k^t_\ast \gets \min k : v^t_k \in \mathcal{V}_O$.
   \item Repeat steps 2--5 until $|\mathcal{V}_O| = N_O$.
    \item Return $\mathcal{V}_O$.
\end{enumerate}

We ran the algorithm to trim the data to 100 origins and 100 destinations to obtain the sets of valid origins and destinations, $\mathcal{V}_O$ and $\mathcal{V}_D$, respectively. 
Figure \ref{fig:ODs} shows the resulting origins and destinations.

This allows trimming each trip $t \in \mathcal{T}$ so that it begins the first time it visits a node $v \in \mathcal{V}_O$, and ends the last time it visits a node $v \in \mathcal{V}_D$. 
A trip has to visit its first valid origin node before reaching its last valid destination node, otherwise the trip is discarded. 
Likewise, trips that do not pass through a valid origin node and a valid destination node are also discarded. 
We retained 1,680,667 trimmed trips that visit their first valid origin node before reaching their final valid destination node. 




\begin{figure}[H]
    \centering
     \includegraphics[width=0.7\textwidth]{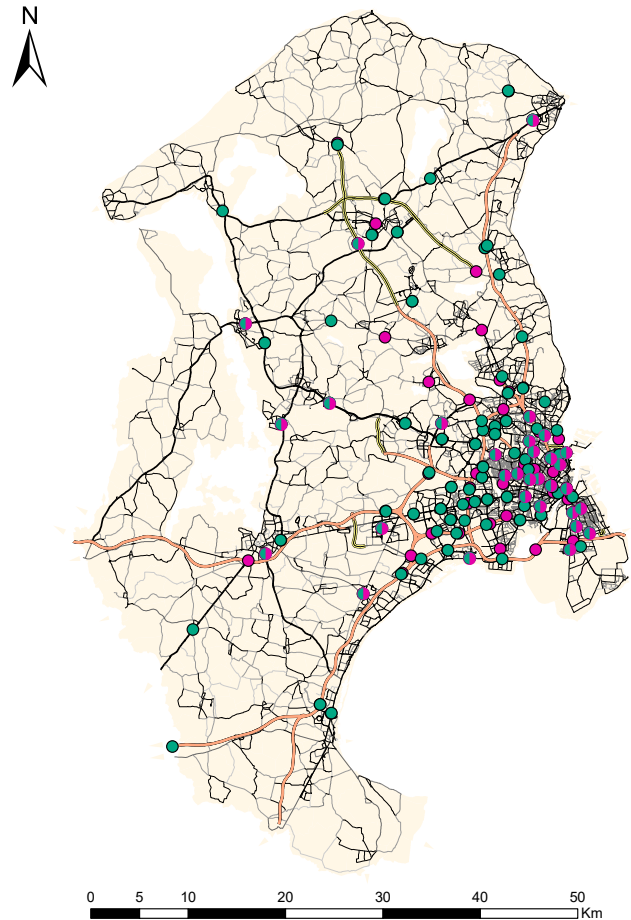} 
    \caption{"Pseudo" origin (green) and destination (red) nodes.}
    \label{fig:ODs}
\end{figure}

\subsection{Deleting non-sensical observations} \label{app:DataProcessing:NonSense}

The 1,680,667 observations with valid OD sections covered a total of 9,284 different OD combinations.
We discarded 1,238 OD combinations  for which all trips used the same links; 187,790 trips were thus discarded.

Some observed trips were non-sensical, perhaps owing to detours for intermediate trip purposes. To identify such trips, we ran a prediction of the model for each OD combination, using a {\tr pace} parameter of $-0.3$, which is much lower than the estimated parameter in model A. The set of active links predicted as active with that single parameter is then quite large and likely to comprise most reasonable routes. 

We then used the selection criterion that observed routes should have at least 95\% of their utility (given in this case simply by {\tr pace}) inside the active set. This criterion led us to discard 155,781 trips. 

In the end, we arrived at an estimation dataset comprising 8,046 active OD combinations and 1,337,096 trips.

\section{Estimation results for sample splits} \label{app:OutOfSample}
 \label{app:DataProcessing:Dividing}

To perform out-of-sample estimations, the 8,046 OD combinations were randomly split in two halves. 
Both contain 4,023 OD combinations and the number of trips is about the same;  668,330 trips in sample split 1 and 668,766 trips in sample split 2.

Tables \ref{tab:ModelsA} and \ref{tab:ModelsB} present the estimation results for the two sample splits. The parameter estimates are very similar.

\begin{table}[H]
\begin{tabular}{|l|rrr|}
\hline
 & Model A & Model B & Model C \\ \hline 
$\text{Constant}_{\text{Outlinks} \geq 2}$   & -- & -0.03423 \textit{(0.00043)}  & -0.01512	 \textit{(0.00044)} \\
{\tr Pace} [min/km]& -0.75163	\textit{(0.00246)} & -0.64228		\textit{(0.00325)}  &  --\\
{\tr Pace} [min/km]: &&& \\
\qquad\textit{Motorways} & -- & -- & -0.35801	\textit{(0.00492)}\\
\qquad\textit{Motorway ramps}  & -- & -- &  -0.55902	\textit{(0.00621)}\\
\qquad\textit{Motor traffic roads} & -- & --  &  -0.58478	 \textit{(0.00638)}\\
\qquad\textit{Other national roads} & -- & --  &  -0.60969	 \textit{(0.00428)}\\
\qquad\textit{Urban roads} & -- & --  &  -0.57789	\textit{(0.00322)}\\
\qquad\textit{Rural roads }& -- & --  &  -0.77801	\textit{(0.00438)}\\
\qquad\textit{Smaller roads }& -- & -- & -0.54239	\textit{(0.01211)}\\
\qquad\textit{Other ramps} & -- & -- &  -0.46494	\textit{(0.00626)}\\ \hline
Adjusted $R^2$ & 0.36394 & 0.36956  & 0.40945 \\ \hline 
\end{tabular}
\caption{Estimation results for sample split 1, based on  615,776  observations corresponding to eq. \eqref{eq:OLS}.} \label{tab:ModelsA}
\end{table}

\begin{table}[H]
\begin{tabular}{|l|rrr|}
\hline
 & Model A & Model B & Model C \\ \hline 
$\text{Constant}_{\text{Outlinks} \geq 2}$   & -- & -0.03428 \textit{(0.00030)}  & -0.01577	 \textit{(0.00042)} \\
{\tr Pace} [min/km] & -0.74131	\textit{(0.00238)} & -0.63332		\textit{(0.00312)}  &  --\\
{\tr Pace} [min/km]: &&& \\
\qquad\textit{Motorways} & -- & -- & -0.34169	\textit{(0.00475)}\\
\qquad\textit{Motorway ramps}  & -- & -- &  -0.54917	\textit{(0.00614)}\\
\qquad\textit{Motor traffic roads} & -- & --  &  -0.53683	 \textit{(0.00676)}\\
\qquad\textit{Other national roads} & -- & --  &  0.58954	 \textit{(0.00410)}\\
\qquad\textit{Urban roads} & -- & --  &  -0.56045	\textit{(0.00297)}\\
\textit{Rural roads }& -- & --  &  -0.77761	\textit{(0.00435)}\\
\qquad\textit{Smaller roads }& -- & -- & -0.51859	\textit{(0.01855)}\\
\qquad\textit{Other ramps} & -- & -- &  -0.42436	\textit{(0.00642)}\\ \hline
Adjusted $R^2$ & 0.36185 & 0.36755  & 0.40832 \\\hline 
\end{tabular}
\caption{Estimation results for sample split 2, based on  618,513 observations corresponding to eq. \eqref{eq:OLS}.} \label{tab:ModelsB}
\end{table}

We performed two sets of out-of-sample predictions with the models, using parameters estimated on one sample split in the prediction for the other sample split. Figures \ref{fig:AValTotalLinkFlows} and \ref{fig:BValTotalLinkFlows} show scatter plots of observed versus predicted link flows. {\tr $R^2_\text{adj}=0.9353$ for sample split 1, and $R^2_\text{adj} = 0.9314$ for sample split 2.}

\begin{figure}[H]
    \centering
    \includegraphics[width=0.7 \textwidth]{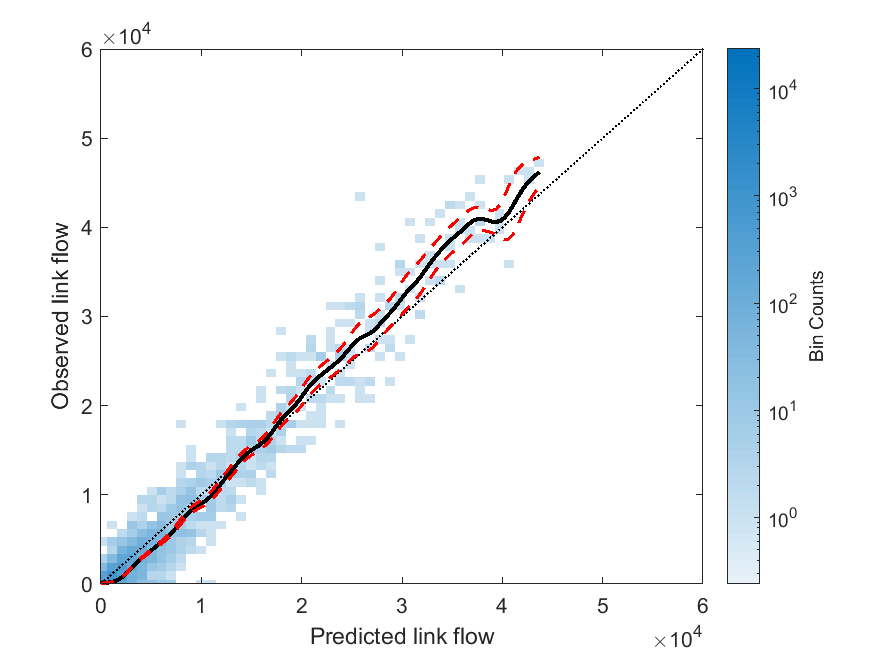}
    \caption{Observed vs. predicted total link flows when summing across all OD combinations (sample split 2 using parameters estimated on sample split 1).}
    \label{fig:AValTotalLinkFlows}
\end{figure}

\begin{figure}[H]
    \centering
    \includegraphics[width=0.7 \textwidth]{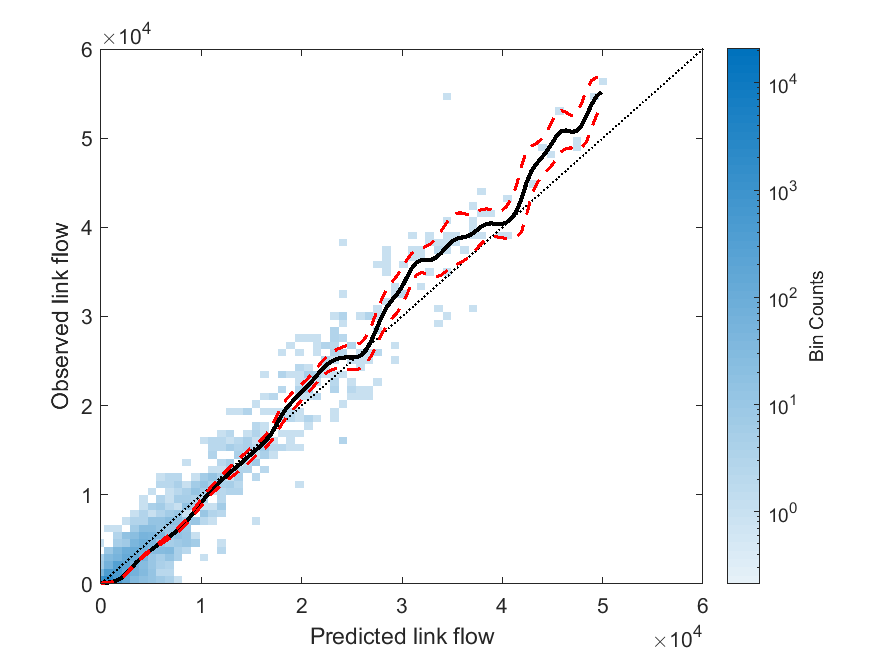}
    \caption{Observed vs. predicted total link flows when summing across all OD combinations (sample split 1 using parameters estimated on sample split 2).}
    \label{fig:BValTotalLinkFlows}
\end{figure}

\pagebreak
\bibliographystyle{ecta}
\bibliography{MFreferences,madspReferences,tkraRefs}

\end{document}